\documentclass[11pt]{article}

\usepackage{fullpage}

\usepackage[utf8]{inputenc}
\usepackage{amsmath}
\usepackage{amsthm}
\usepackage{thmtools}
\usepackage{thm-restate}
\usepackage{bbm}
\usepackage{amsfonts}
\usepackage{amssymb}
\usepackage{units}
\usepackage{xcolor}
\usepackage{graphicx}
\usepackage[ruled,linesnumbered]{algorithm2e}
\usepackage{comment}
\usepackage{xfrac}
\usepackage{tikz}
\let\oldnl\nl
\newcommand{\nonl}{\renewcommand{\nl}{\let\nl\oldnl}}
\newcommand{\etal}{\textit{et~al.}\xspace}

\declaretheorem[name=Theorem,numberwithin=section]{theorem}
\declaretheorem[sibling=theorem,name=Lemma]{lemma}
\declaretheorem[sibling=theorem,name=Claim]{claim}
\declaretheorem[sibling=theorem,name=Corollary]{corollary}

\declaretheorem[sibling=theorem,name=Definition]{definition}

\newcommand{\C}{\mathcal{C}}

\newcommand{\dist}{\mathrm{dist}}
\newcommand{\e}{\mathbb{E}}

\newcommand{\ld}{{\log{d}}}
\newcommand{\eps}{\varepsilon}

\newcommand{\spaOne}{\mbox{\sf SpannerPartOne(G,d,k) }}
\newcommand{\spaTwo}{\mbox{\sf SpannerPartTwo(G$_{\texttt{s}}$,R,k) }}

\providecommand{\email}[1]{\texttt{#1}}
\providecommand{\poly}{\operatorname{poly}}

\title{$(\alpha, \beta)$ Spanners and Hybrid Spanners with Nearly Tight Bounds
\thanks{A preliminary version of this paper appeared in SODA 2026.}}

\author{
Shiri Chechik
\and
Gur Lifshitz%
\thanks{Tel Aviv University, Israel
(\email{shiri.chechik@gmail.com},
\email{gurlifshitz@gmail.com}).}
}

\begin{document}

\maketitle

\begin{abstract}
An $(\alpha,\beta)$-spanner of an $n$-vertex undirected, unweighted graph $G=(V,E)$ is a subgraph $H$ satisfying, for all $u,v\in V$,
\[
\dist_G(u,v) \le \dist_H(u,v) \le \alpha\cdot \dist_G(u,v) + \beta.
\]
For any $k\in\mathbb{N}$, classical results show that a $(2k-1,0)$-spanner with $O(n^{1+1/k})$ edges exists and is asymptotically optimal under Erd\H{o}s' girth conjecture. This conditional lower bound comes from adjacent pairs and does not rule out better stretch for more distant pairs.

We present new spanner constructions that achieve nearly optimal guarantees for all distances $d \le k$. Specifically, we construct a spanner $H\subseteq G$ with
\(
O( n^{1+1/k} + (k+d \log{d}) \, n)
\)
edges, ensuring that any pair at original distance at most $d$ satisfies $\dist_H(u,v)\le 2k + O(d\log d)$. Equivalently, the multiplicative stretch for pairs at distance $d$ is $2k/d + O(\log d)$. 

In particular, setting $d=k/\log k$ yields an $(O(\log k),O(k))$-spanner with $O(n^{1+1/k}+kn)$ edges. For comparison, Ben-Levy and Parter (SODA'20) obtained, for every fixed $\varepsilon>0$ and sufficiently large $k$, an $(O(k^\varepsilon),O_\varepsilon(k))$-spanner with $O_{\varepsilon,k}(n^{1+1/k})$ edges. Our result improves the multiplicative stretch from $O(k^\varepsilon)$ to $O(\log k)$ while keeping the additive term linear in $k$, bringing us closer to the goal of $(O(1),O(k))$-spanners.

Furthermore, Ben-Levy and Parter obtained multiplicative stretch $O_\varepsilon(k/d)$ for distances $d\le k^{1-\varepsilon}$, for every fixed $\varepsilon>0$, and an explicit bound of $7k/d$ for $d\le\sqrt{k}/2$. We achieve $2k/d+O(\log d)$, which is $(2+o(1))k/d$ whenever $d=o(k/\log k)$.

Our second result is an improved construction of $k$-hybrid spanners, which guarantee stretch $2k-1$ for adjacent pairs and $k$ for non-adjacent pairs. Parter's original construction uses $O(k^2 n^{1+1/k})$ edges; we achieve the same guarantees with $O(n^{1+1/k}+kn)$ edges, removing the $k^2$ factor from the $n^{1+1/k}$ term. For every fixed $k$, our edge bound is optimal up to a constant factor under Erd\H{o}s' girth conjecture.
\end{abstract}

\section{Introduction}

Graph spanners are fundamental combinatorial structures that yield compact graph representations while approximately preserving pairwise distances. Since their introduction in \cite{peleg1988graph, Optimal_Synchronizer}, spanners have found widespread applications in network design, distributed computing, routing, and graph sparsification \cite{Distributed_Computing}. Given an undirected, unweighted $n$-vertex graph $G=(V,E)$ and a function $f:\mathbb{N}\to\mathbb{N}$, a subgraph $H\subseteq G$ is called an \emph{$f(d)$-spanner} if, for every pair $(u,v)\in V\times V$, it holds that
\[
\dist_H(u,v)\ \le\ f\!\bigl(\dist_G(u,v)\bigr).
\]
We refer to $f$ as the \emph{stretch function}; there is a tradeoff between the size of $H$ and the quality of $f$.

The study of spanners began with \emph{$\alpha$-multiplicative} spanners, i.e., when $f(d)=\alpha\cdot d$ for some $\alpha > 1$. For every $k\in\mathbb{N}$ it is well known that a $(2k\!-\!1)$-multiplicative spanner with $O(n^{1+1/k})$ edges can be efficiently constructed; this bound is optimal under Erd\H{o}s' girth conjecture \cite{P1964, sparse_spanners}. 

Another line of work studies \emph{additive} spanners, i.e., $f(d)=d+\beta$. There are classical constructions achieving constant additive stretch $\beta\in\{2,4,6\}$ with progressively sparser graphs as $\beta$ increases; see \cite{plusTwo, Chechik2016, alpha_beta_05} for the exact bounds and tradeoffs. When one insists on very sparse spanners, a \emph{phase transition} occurs: Abboud and Bodwin~\cite{4/3tight} showed that, for every fixed $\delta>0$, spanners with $O(n^{4/3-\delta})$ edges require additive error $n^{\Omega_\delta(1)}$ in the worst case. Tan and Zhang~\cite{Tan23} constructed additive spanners with $\widetilde{O}(n)$ edges and additive error $O(n^{0.403})$, that improves the error exponent of the linear-size spanners of Bodwin and Vassilevska Williams~\cite{Bod_wil21}, whose additive error is $O(n^{3/7+\varepsilon})$ for any fixed $\varepsilon>0$.

These directions motivated the study of \emph{$(\alpha,\beta)$-spanners}, which combine both multiplicative and additive terms, i.e., $f(d)=\alpha\cdot d+\beta$. Elkin and Peleg \cite{ElkinPeleg01} showed that, for every $\varepsilon\in(0,1)$ and integer $k\ge 1$, there exist $(1+\varepsilon,\beta)$-spanners with $O_{\varepsilon,k}(n^{1+1/k})$\footnote{The notation \(O_a(\cdot)\) hides a multiplicative dependence on a function of \(a\).} edges and
\(
\beta \;=\; \left(O\!\left(\tfrac{\log k}{\varepsilon}\right)\right)^{\log k},
\)
thereby establishing explicit improved guarantees for sufficiently distant pairs. Thorup and Zwick~\cite{TZ06} gave a simple sparse \emph{emulator}\footnote{An $f(d)$-emulator is a (possibly non-subgraph) weighted sparse graph $G'$ that preserves all pairwise distances up to a stretch function $f(d)$, i.e., $\dist_G(u,v) \le \dist_{G'}(u,v) \le f(\dist_G(u,v))$ for all $u,v \in V$. It is thus a relaxed variant of a spanner.} that is a $(1+\varepsilon,\beta)$-emulator for every $\varepsilon>0$, where smaller values of $\varepsilon$ entail larger additive terms $\beta$. Expressed in terms of our sparsity parameter $k$, their construction has $O_k(n^{1+1/k})$ edges and sublinear additive stretch
\(
f(d)=d+O\!\bigl(\log k\cdot d^{\,1-1/\log k}+\poly(k)\bigr).
\)
The same emulator satisfies this bound simultaneously for all distances $d$.
Abboud, Bodwin, and Pettie~\cite{Lower_Bounds} showed that spanners with $O_k(n^{1+1/k})$ edges must incur additive error $\Omega_k\!\bigl(d^{\,1-O(1/\log k)}\bigr)$ for some pairs at distance $d$. Moreover, for sufficiently small $\varepsilon>0$, any $(1+\varepsilon,\beta)$-spanner construction with $O_{\varepsilon,k}(n^{1+1/k})$ edges must satisfy
\(
\beta=\left(\frac{1}{\varepsilon\log k}\right)^{\Omega(\log k)}.
\)
These lower bounds do not exclude $(1+\varepsilon,O_\varepsilon(k))$-spanners with $O_{\varepsilon,k}(n^{1+1/k})$ edges.

\medskip
\noindent\textbf{Other $(\alpha,\beta)$ regimes.}
While the near-additive spanners of Elkin and Peleg~\cite{ElkinPeleg01} provide a $(1+\varepsilon)$ multiplicative approximation for sufficiently large distances, a natural question is how well one can approximate \emph{smaller} distances. Ben-Levy and Parter~\cite{New_ab_spanners} addressed this regime by constructing spanners that achieve improved stretch guarantees for shorter distances while maintaining near-optimal sparsity of $O_{\varepsilon,k}(n^{1+1/k})$ edges. They obtained (up to constants) nearly optimal bounds for more distance ranges, achieving $f(d) = O_\varepsilon(k + d)$ outside the window $d \in [k^{1-\varepsilon},\,k^{1+\varepsilon}]$, with concrete instantiations such as $f(d) = 7k$ for $d \le \sqrt{k}/2$ and an $(O(k^\varepsilon), O_\varepsilon(k))$-spanner, implying $f(k^{1-\varepsilon}) = O_\varepsilon(k)$. Moreover, they constructed $(\alpha,\beta)$-spanners with $\alpha = O_\varepsilon(1)$ and $\beta = k^{1+\varepsilon}$, providing constant multiplicative stretch from distance $d \ge k^{1+\varepsilon}$. These results delineate the current picture under near-optimal sparsity and highlight the remaining open gap around $d \approx k$.


The ``holy grail'' in this line of research is to achieve a stretch function $f(d) = 2k + O(d)$ while maintaining the optimal $O(n^{1+1/k})$ edge bound. When $d \approx k$, this corresponds to constructing an $(O(1), O(k))$-spanner. This target is believed to be the best possible: under the girth conjecture, there exists a graph $G$ and a pair of vertices at distance $d$ in $G$ that must be at least $2k + O(d)$ apart in any spanner $H \subsetneq G$. Moreover, by a classical lower bound of Woodruff~\cite{woodruff2006}, who showed that for every fixed $k$, guaranteeing $f(k)<3k$ requires $\Omega(k^{-1}n^{1+1/k})$ edges in the worst case.

\medskip
A notable related result is due to Baswana, Kavitha, Mehlhorn, and Pettie \cite{alpha_beta_05}, who constructed a $(k,k\!-\!1)$-spanner that guarantees multiplicative stretch $2k\!-\!1$ for neighboring vertices and at most $3k/2$ for non-adjacent pairs. Another approach to improving stretch guarantees is
through hybrid spanners. Parter \cite{Bypassing} introduced $k$-hybrid spanners, which achieve multiplicative stretch $2k\!-\!1$ for adjacent pairs while reducing the stretch to $k$ for non-adjacent pairs, using $O(k^2 n^{1+1/k})$ edges. While optimal for pairs at distance $2$ under the girth conjecture, this construction provides no substantial improvement for distant pairs.

A substantial body of work explores additional tradeoffs among stretch, sparsity, and computational efficiency. See, for example, \cite{tutorial, NoC235, Baswana_linear_weighted, Low_distortion_spanners, Very_Sparse_Spanners, Additive_Spanners}. For directed-graph spanners, see \cite{Roundtrip_spanners08, roundtrip_spanners17, roundtrip_spanners18, roundtrip_spanners19, roundtrip_spanners20a, roundtrip_spanners20b, roundtrip_spanners21, roundtrip_spanners23}.

\subsection{Our Results}

We show that for any $d\le k$, there exists a spanner with $O_k(n^{1+1/k})$ edges such that any pair of vertices at distance $\leq d$ in $G$ is at most $2k + O(d \log d)$ apart in $H$. This brings us significantly closer to the conjectured lower bound of \(2k + O(d)\) under the Erd\H{o}s' girth conjecture.

\begin{theorem}
\label{th:ab-spa}
Let $G = (V, E)$ be an undirected unweighted graph, and let $k,d$ be integers such that $ d \leq k$. Then, there is a randomized algorithm that runs in $O(m\sqrt{n})$ expected time, constructing a spanner $H$ with $O\!\bigl(n^{1+1/k} + (k + d\log d)\,n\bigr)$ edges in expectation, such that for every pair $(u,v)$ with $\dist_G(v,u)\le d$,
\[
\dist_H(v,u) \le 2k + O(d\log d).
\]
\end{theorem}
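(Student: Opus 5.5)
We build $H$ in two phases, following the two subroutines named in the preamble: \spaOne and \spaTwo. The first phase is a Baswana--Sen style clustering with $k$ levels. The second phase handles the last $O(\log d)$ levels, where clusters are already large, by a separate connectivity argument.

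\textbf{Phase one: clustering.} We run a randomized clustering $V=\C_0\supseteq \C_1\supseteq\cdots$. Each cluster at level $i$ is sampled into level $i+1$ with probability $n^{-1/k}$. An unsampled cluster $C$ either joins an adjacent sampled cluster through one edge or, if none exists, adds one edge to every adjacent cluster and "settles". The usual analysis gives $O(n^{1+1/k})$ edges in expectation, since each vertex adds $O(n^{1/k})$ edges per level in expectation. It also gives radius at most $i$ for a level-$i$ cluster.

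We stop after $k-\Theta(\log d)$ levels instead of $k$. The remaining clusters $R$ then number about $n^{\Theta(\log d)/k}$, and each has radius at most $k$.

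\textbf{Stretch for pairs that settle.} Take a shortest path $P=(u=x_0,\dots,x_\ell=v)$ with $\ell\le d$. Following Baswana--Kavitha--Mehlhorn--Pettie-type arguments, we track the lowest level at which some edge of $P$ becomes settled.

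If the edge $(x_j,x_{j+1})$ settles at level $i$, then $H$ contains an edge from $x_j$'s level-$i$ cluster to $x_{j+1}$'s cluster. This yields a detour of length about $2i+1$. The key point is to pay $2k$ only once along the whole path. We charge each later edge of $P$ only $O(\text{level increment})$ by walking inside clusters. Both endpoints can be kept in clusters of similar radius, so hops between adjacent clusters cost $O(1)$ plus the radius difference.

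Concretely, we process $P$ greedily. At each step we extend the current cluster along $P$ as far as possible, and we use a potential argument showing that the total radius growth across the path segments is bounded by the number of levels. This gives $2k$ from the top-level cluster radii, plus $O(d\log d)$. The $\log d$ factor comes from a doubling argument: segments of length $2^j$ require clusters at level roughly $j$ above the base, and there are $\log d$ scales, each contributing $O(d)$.

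\textbf{Phase two: $R$.} For pairs where no edge ever settles before the cutoff, the whole path lies in surviving clusters. There we add, for each surviving cluster, a BFS tree of depth $O(d\log d)$, or else edges to all adjacent surviving clusters. Since $|R|$ is small, the interconnecting edges cost $O(n)$ per scale, which accounts for the $O(d\log d\cdot n)$ term. The $kn$ term pays for the cluster spanning trees kept across the levels.

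\textbf{Running time and main obstacle.} The $O(m\sqrt n)$ running time comes from the BFS computations in phase two, run on a sparsified graph. The main obstacle is the potential argument in the settling case: showing that the additive loss beyond $2k$ is $O(d\log d)$ rather than $O(dk)$. I would attempt this by forcing path segments to merge into a common cluster within $O(\log d)$ levels once they settle. The failure mode is pairs that repeatedly hop between distinct clusters of radius near $k$.
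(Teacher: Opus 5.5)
Your proposal has a genuine gap, and it is the step you yourself flag as ``the main obstacle'': showing that a path of length $\le d$ pays $2k$ only once. This is not a detail to fill in later. It is the whole content of the theorem, and the structure you build does not support it. In Baswana--Sen clustering, distinct clusters are connected only in two ways: through cluster trees of radius $i$, and, for a vertex settling at level $i$, through one edge to each adjacent cluster of $\C_{i-1}$. Take a path $x_0,\dots,x_d$ whose consecutive vertices lie in distinct clusters of level near your cutoff $k-\Theta(\log d)$ and settle there. Your edges certify each hop only by a route through a cluster tree of radius $\Theta(k)$, so they guarantee nothing better than $\Theta(kd)$, which is the classical per-edge $(2k-1)$ bound. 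Nothing in the clustering ``forces path segments to merge into a common cluster'': settled vertices never rejoin, and adjacent sampled clusters stay distinct. So the potential argument has nothing to act on. To pay $\approx 2k$ once, you must add shortest paths between \emph{nearby} clusters or centers at distance up to $\approx k+d\log d$. The standard product bounds on cluster pairs (as in the BKMP and Parter rules) pay for only $O(1)$ edges per pair, not for paths of length $\Theta(k)$. Your phase-two accounting (``$O(n)$ per scale'') and the running-time attribution are likewise asserted rather than derived.

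The missing ingredient is a sparsity-driven construction rather than a clustering hierarchy. Iteration $i<\log d$ of the paper's construction does four things:
\begin{itemize}
\item it samples centers from the residual set $V_i$ with probability $n^{-1/2^{i+1}}$, so the number of centers \emph{grows} ($\sqrt n, n^{3/4},\dots$);
\item it adds a shortest path in $G[V_i]$ between every two centers at distance $\le r_i$;
\item it adds a BFS forest to the nearest center;
\item it deletes every vertex within $r_{i+1}=\tfrac12 r_i-d/2^i$ of a center, starting from $r_0=k+2d\log d$.
\end{itemize}
The deletion is what pays for the long center-to-center paths. By Lemma~\ref{cl:size_gen}, radius-$r_i$ balls in $G[V_i]$ have expected size at most $n^{1/2^i}$. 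Hence the expected number of linked pairs is $O(n)$, and iteration $i$ costs $O(r_i n)$ edges (Lemma~\ref{le:size1}).

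The stretch then follows from a clean halving induction rather than a potential. Suppose a path of length $\le d/2^i$ in $G[V_i]$ passes through a deleted vertex. Then $\dist_i(c_i(s),c_i(t))\le 2r_{i+1}+2|P|\le r_i$, giving a detour of length $4r_{i+1}+3|P|$ (Lemma~\ref{cl:stretch_gen}). Otherwise the path is split in half, and $8r_{i+2}<4r_{i+1}$ closes the induction (Lemma~\ref{cl:stretch_all_gen}).

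After $\log d$ rounds the residual graph is $(k/d,k)$-sparse. On it, a Peleg-style ball growing gives a $(2k/d-1)$-spanner with $O(n^{1+1/k})$ expected edges. It uses budgets $b(x)=\max\{n^{1/k},|B_s(x,R)|/n^{(R-1)/k}\}$ and always picks the minimum-budget center. Since $2k/d-1=4r_{\log d+1}+3$, this is exactly the base case of the induction, and $i=0$ yields $2k+4d\log d-d$. So the single $2k$ is paid either as the radius $r_0\approx k$ of one center detour, or as $d$ edges each stretched by $2k/d-1$, never through Baswana--Sen cluster radii.

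Finally, the $O(m\sqrt n)$ time comes mainly from the BFS runs out of the $\approx\sqrt n$ centers of iteration~$0$, after a degree-reduction preprocessing. Phase two costs only $O(mn^{1/d})$.
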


This result can be compared to that of Ben-Levy and Parter~\cite{New_ab_spanners}, who showed that for any \(0 < \varepsilon < \tfrac{1}{2}\) and integers \(k > 2^{4/\varepsilon}, d \ge 1\), one can construct a spanner \(H \subseteq G\) of expected size
\(
O\!\left(k^\varepsilon n^{1+1/k} + \bigl(2^{6/\varepsilon}k^{1+\varepsilon} + d\bigr)n\right),
\)
such that every pair of vertices at distance at most \(d\) in \(G\) satisfies
\(
\dist_H(u,v) \le O\!\left(2^{6/\varepsilon}k + k^\varepsilon d\right).
\)
Their global construction gives an $(O(k^\varepsilon),2^{O(1/\varepsilon)}k)$-spanner with $O_{\varepsilon,k}(n^{1+1/k})$ edges.

Our construction achieves strictly stronger guarantees: for all \(d \le k/\log k\), we obtain an \((O(k/d), O(k))\)-spanner using fewer edges and with faster expected running time. Substituting \(d = k/\log k\) in Theorem~\ref{th:ab-spa} yields an \((O(\log k), O(k))\)-spanner, which approaches the conjectured \((O(1), O(k))\) lower bound when \(d \approx k\). Thus, compared with Ben-Levy and Parter’s bounds in the intermediate range \(d \in [k^{1 - o(1)},\,k^{1 + o(1)}]\), our results achieve an exponentially smaller multiplicative stretch (while maintaining the same additive term), thereby filling a significant gap in the literature.

\begin{corollary}
\label{co:logk-k-spa}
For any integer $ k $ and an undirected unweighted graph $G = (V, E)$, there exists a $(O(\log k), O(k))$-spanner $H$ with $ O(n^{1 + 1/k} + kn)$ edges.
\end{corollary}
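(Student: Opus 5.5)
We derive the corollary directly from Theorem~\ref{th:ab-spa} by fixing $d$ as a function of $k$, namely $d=\max\{1,\lfloor k/\log k\rfloor\}$, and then extending the distance-$\le d$ guarantee to all pairs by subdividing shortest paths. For small $k$ (say $k$ below an absolute constant) we take $H$ to be a $(2k-1)$-spanner with $O(n^{1+1/k})$ edges, which is already an $(O(1),0)$-spanner. So assume $k$ is large enough that $1\le d\le k$.

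\textbf{Size.} With this choice $d\log d\le (k/\log k)\cdot \log k = k$. Theorem~\ref{th:ab-spa} then yields a spanner $H$ with $O\bigl(n^{1+1/k}+(k+d\log d)n\bigr)=O(n^{1+1/k}+kn)$ edges in expectation. Since the statement is existential, some outcome of the randomized construction has at most this many edges (up to a constant factor, by Markov or just by the expectation). I will also use that outcome's stretch guarantee, which the theorem states deterministically for every pair at distance at most $d$.

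\textbf{Stretch for pairs with $\dist_G(u,v)\le d$.} Theorem~\ref{th:ab-spa} gives $\dist_H(u,v)\le 2k+O(d\log d)=O(k)$. This is within $O(\log k)\cdot\dist_G(u,v)+O(k)$.

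\textbf{Stretch for pairs with $D=\dist_G(u,v)>d$.} Take a shortest $u$--$v$ path in $G$ and cut it at every $d$-th vertex into $\lceil D/d\rceil\le 2D/d$ consecutive segments. Each segment joins two vertices at $G$-distance at most $d$, so by the previous case each segment is spanned in $H$ by a path of length $O(k)$. Concatenating these paths with the triangle inequality gives
\[
\dist_H(u,v)\le \frac{2D}{d}\cdot O(k)=O\!\left(\frac{k}{d}\right)D=O(\log k)\cdot D .
\]
Combining the two cases, every pair satisfies $\dist_H(u,v)\le O(\log k)\dist_G(u,v)+O(k)$, so $H$ is an $(O(\log k),O(k))$-spanner.

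The only delicate point is checking that the choice of $d$ balances both terms at once: $d\log d\le k$ keeps the edge count at $O(kn)$, and $k/d=O(\log k)$ keeps the multiplicative factor at $O(\log k)$. Both hold for $d=\Theta(k/\log k)$.
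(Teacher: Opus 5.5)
Your proposal is correct and follows the paper's route: the paper obtains the corollary by substituting $d=k/\log k$ into Theorem~\ref{th:ab-spa}. You make explicit the steps the paper leaves implicit — that $d\log d\le k$ keeps the size at $O(n^{1+1/k}+kn)$ and the additive term at $O(k)$, that cutting a shortest path into $\lceil D/d\rceil$ segments gives multiplicative stretch $O(k/d)=O(\log k)$ for pairs at distance $D>d$, and that the stretch guarantee holds deterministically, so some outcome meets both bounds.
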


By combining all our constructions for distances $d\le k$ with the bounds of Ben-Levy and Parter~\cite{New_ab_spanners}, we obtain the following piecewise upper bound on the stretch function.

\begin{corollary}
For any $\varepsilon>0$, integer $k$, and any undirected unweighted graph $G=(V,E)$, there exists an $f(d)$-spanner $H$ with $O_{\varepsilon,k}\!\left(n^{1+1/k}\right)$ edges, where
\[
f(d)\;=\;
\begin{cases}
2k + O\!\bigl(d\log d\bigr), & \text{if } d \le k,\\[2pt]
O\!\bigl(d\log k\bigr), & \text{if } k < d \le k^{1+\varepsilon},\\[2pt]
O_\varepsilon\!\bigl(d\bigr), & \text{if } d > k^{1+\varepsilon}.
\end{cases}
\]
\end{corollary}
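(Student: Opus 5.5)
The piecewise bound will be assembled by taking the union of three spanners, one per distance range. Since the union of $O(1)$ spanners, each with $O_{\varepsilon,k}(n^{1+1/k})$ edges, still has $O_{\varepsilon,k}(n^{1+1/k})$ edges, and adding edges can only decrease $\dist_H$, each pair $(u,v)$ inherits the best guarantee among the components. The term $(k+d\log d)n$ in Theorem~\ref{th:ab-spa} is $O(kn\log k)$ for $d\le k$. This is $O_k(n)$, so it is absorbed into $O_{\varepsilon,k}(n^{1+1/k})$.

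\emph{Range $d\le k$.} Apply Theorem~\ref{th:ab-spa} with parameter $d=k$ to get $H_1$. Every pair at distance $d'\le k$ then satisfies
\[
\dist_{H_1}(u,v)\le 2k+O(k\log k).
\]
This is too weak for small $d'$, since we want $2k+O(d'\log d')$. To fix this, take instead the union over the dyadic values $d_i=2^i$ for $i=0,\dots,\lceil\log k\rceil$ of the spanners $H^{(i)}$ given by Theorem~\ref{th:ab-spa} with parameter $d_i$. A pair at distance $d'$ is handled by the smallest $d_i\ge d'$, and then $d_i\le 2d'$, so its stretch is $2k+O(d_i\log d_i)=2k+O(d'\log d')$. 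The size is
\[
\sum_i O\bigl(n^{1+1/k}+(k+d_i\log d_i)n\bigr)=O\bigl(\log k\cdot n^{1+1/k}+k\log k\cdot n\bigr),
\]
which is $O_k(n^{1+1/k})$. This is the one step needing care: the union is required so that the guarantee is simultaneous for all $d'\le k$, and the $\log k$ overhead is hidden in $O_{\varepsilon,k}$.

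\emph{Range $k<d\le k^{1+\varepsilon}$.} Use Corollary~\ref{co:logk-k-spa}, an $(O(\log k),O(k))$-spanner. For $d>k$ it gives $O(d\log k)+O(k)=O(d\log k)$.

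\emph{Range $d>k^{1+\varepsilon}$.} Invoke the Ben-Levy--Parter $(O_\varepsilon(1),k^{1+\varepsilon})$-spanner, which has $O_{\varepsilon,k}(n^{1+1/k})$ edges. For $d>k^{1+\varepsilon}$ the additive term is at most $d$, giving $O_\varepsilon(d)$.

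Taking $H$ to be the union of all these subgraphs gives the stated $f(d)$ with $O_{\varepsilon,k}(n^{1+1/k})$ edges.
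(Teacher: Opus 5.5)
Your proposal is correct and matches the paper's brief justification. The paper takes the union of its own constructions for distances $d\le k$, including the $(O(\log k),O(k))$-spanner of Corollary~\ref{co:logk-k-spa} for the middle range, together with the Ben-Levy--Parter $(O_\varepsilon(1),k^{1+\varepsilon})$-spanner for large distances. Your dyadic union makes the $d\le k$ guarantee hold simultaneously at only an $O(\log k)$ size overhead, which is absorbed into $O_{\varepsilon,k}(n^{1+1/k})$. One small nit: the top scale $2^{\lceil\log k\rceil}$ can exceed $k$, so cap it at $d=k$ to stay within the hypothesis $d\le k$ of Theorem~\ref{th:ab-spa}.
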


In addition to our main spanner construction, we present a $k$-hybrid spanner with $O(n^{1 + 1/k}+kn)$ edges, improving upon Parter's construction \cite{Bypassing}, which required $O(k^2 n^{1 + 1/k})$ edges. For every fixed $k$, this matches the optimal exponent of $n$ under the girth conjecture, while removing the $k^2$ multiplicative factor from Parter's edge bound.

\begin{theorem}
\label{th:2-spa}
Let $G = (V, E)$ be an undirected unweighted graph, and let $k \in \mathbb{N}$. Then, there is a randomized construction of a spanner $H$ with $O(n^{1 + 1/k}+kn)$ edges in expectation that provides a multiplicative stretch of $2k - 1$ for every pair of neighboring vertices $u$ and $v$ and a multiplicative stretch of $k$ for the rest of the pairs. This construction can be done in $ O(m \sqrt{n})$ expected time.
\end{theorem}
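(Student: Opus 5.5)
We plan to build $H$ as the union of two parts. The first is a Baswana--Sen style clustering hierarchy, which handles adjacent pairs with stretch $2k-1$. The second is an augmentation that handles pairs at distance at least $2$.

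\textbf{Clustering.} We sample $V=A_0\supseteq A_1\supseteq\dots\supseteq A_k=\emptyset$, each vertex of $A_{i-1}$ surviving to $A_i$ with probability $n^{-1/k}$. For every vertex $v$ and level $i$, let $p_i(v)$ be the nearest vertex of $A_i$. We add a shortest path from $v$ to $p_i(v)$; the union of these paths forms BFS-like trees rooted at the centers, costing at most $O(kn)$ edges in total. For every $v$ and level $i$, we add one edge to each cluster of level $i-1$ adjacent to $v$ whose center is closer than $p_i(v)$. The standard Thorup--Zwick/Baswana--Sen counting gives $O(n^{1+1/k})$ edges in expectation, since the number of such clusters at each level is geometric with mean $n^{1/k}$. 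This alone already gives stretch $2k-1$ for every edge.

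\textbf{Pairs at distance $\delta\ge2$.} Take $u,v$ at distance $\delta\ge 2$ with shortest path $P$. We use the usual ``ball growing'' argument along $P$, indexed by the level $i$ at which the cluster of some path vertex first swallows a neighbor. The point is that the bound needed is $k\delta$, which for $\delta\ge2$ means a budget of at least $2k$. We would partition $P$ into consecutive pairs of edges, i.e.\ segments $x$--$y$--$z$ of length $2$. For each such segment, the goal is $\dist_H(x,z)\le 2k$. Concatenating these bounds gives $\dist_H(u,v)\le k\delta$, with an odd leftover edge handled by the $2k-1$ guarantee; this stays within $k\delta$ since $k(\delta-1)+2k-1\le k\delta$ once $\delta\ge 2$ is split appropriately.

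To get the $2$-hop guarantee, we refine the Thorup--Zwick argument. Let $i$ be the first level at which the middle vertex $y$ satisfies $y \in C(p_{i}(x))$ or $y \in C(p_i(z))$, or otherwise attaches to a cluster. We want both $x$ and $z$ to reach a common center $c$ with $\dist(x,c)+\dist(c,z)\le 2k$. The hierarchy gives $\dist(w,p_i(w))\le i$ growth per level only for adjacent vertices. To keep the two-hop stretch at $k$ rather than $2k$, we add one extra component: for every vertex $y$ and every level $i$, an edge from $y$ to one neighbor in each adjacent cluster \emph{only at the single level} where $y$'s own bunch is defined. Equivalently, we store the cluster trees as shortest-path trees so that $\dist_H(w,p_i(w))=\dist_G(w,p_i(w))\le$ the true distance.

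With exact shortest-path trees, the bound follows from an induction. For $\Delta_i=\max(\dist(x,p_i(x)),\dist(z,p_i(z)))$, the triangle inequality through $y$ gives $\Delta_{i+1}\le \Delta_i+2$ whenever the clusters fail to catch the pair. Starting from $\Delta_0=0$ yields a final path of length $\le 2\Delta_{i}+2\le 2k$ roughly, i.e.\ $k\cdot2$.

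The tree paths add $O(kn)$ edges, giving the $O(n^{1+1/k}+kn)$ bound. Computing the clusters uses BFS truncated by sampling, and the shortest-path trees plus the segment verification can be organized in $O(m\sqrt n)$ expected time via the usual bunch-size bounds.

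\textbf{Main obstacle.} The hardest step is the two-hop induction giving exactly $2k$ rather than $2k+O(1)$. If the naive recursion gives $2k+2$, I would first try to shave it by charging the last level separately, using that $A_{k-1}$ clusters are full BFS trees.
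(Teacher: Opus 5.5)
Your proposal has a genuine gap at its central step, the bound $\dist_H(x,z)\le 2k$ for a two-hop path $x$--$y$--$z$. Your recursion $\Delta_{i+1}\le\Delta_i+2$ from $\Delta_0=0$ is the Thorup--Zwick argument with $\delta=2$. Over a hierarchy of $k$ levels it gives $\Delta\le 2(k-1)$ and a detour of length $2\Delta+2\le 4k-2$, i.e.\ stretch $2k-1$ again, not $k$. The loss is a factor of two, not the additive constant your ``main obstacle'' paragraph anticipates. Your ``extra component'' does not help. One edge into each adjacent cluster is already the Baswana--Sen/bunch rule, and shortest-path cluster trees do not change the recursion. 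So your $H$ is a standard $(2k-1)$-spanner, and your argument could only reach $2k$ if it forced every pair to be caught by level about $(k-1)/2$, which the full hierarchy does not guarantee.

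The missing idea, which the paper supplies, has three parts:
\begin{itemize}
\item Stop the clustering at level $\lceil k/2\rceil$, so cluster radii are about $k/2$.
\item Add connections across the midpoint: up to two edges realizing a path of length $\le 2$ for every pair $C\in\C_{\lfloor k/2\rfloor-1}$, $C'\in\C_{\lceil k/2\rceil}$ at distance $\le 2$, and one edge for every adjacent pair in $\C_{\lfloor k/2\rfloor}\times\C_{\lceil k/2\rceil-1}$.
\item Observe that this is affordable, because both products of expected cluster counts equal $n^{1+1/k}$.
\end{itemize}
With these, the two-hop case on $v$--$w$--$u$ splits three ways:
\begin{itemize}
\item If $v,w$ survive and $u$ is $\lceil k/2\rceil$-clustered, routing $v\to C_{\lfloor k/2\rfloor-1}(v)\to C_{\lceil k/2\rceil}(u)\to u$ costs $2(\lfloor k/2\rfloor-1)+2+2\lceil k/2\rceil=2k$.
\item If $u$ drops out at the last level, its cover edge into $C_{\lceil k/2\rceil-1}(w)$ plus the adjacent-pair edge from $C_{\lfloor k/2\rfloor}(v)$ give $2\lfloor k/2\rfloor+1+2(\lceil k/2\rceil-1)+1=2k$.
\item Any vertex that drops out at a level $l\le\lceil k/2\rceil$ has every incident edge stretched by at most $2l-1\le k$.
\end{itemize}

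Second, your concatenation for odd $\delta$ fails. Taking $(\delta-1)/2$ two-hop segments at $2k$ each plus a leftover edge at $2k-1$ gives $k\delta+k-1$, which exceeds $k\delta$; your inequality $k(\delta-1)+2k-1\le k\delta$ reduces to $k\le 1$. You need a separate bound of $3k$ for distance-$3$ pairs, and then a decomposition of the path into segments of length $2$ and $3$. The paper gets this bound from the same midpoint connections. For $v$--$w$--$z$--$u$ with all four vertices clustered, it routes both endpoints into $C_{\lceil k/2\rceil}(z)$, at total cost at most $2\lfloor k/2\rfloor+2\lceil k/2\rceil+2\lfloor k/2\rfloor-1\le 3k-1$. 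If some vertex drops out, one edge costs at most $k$ and the remaining two-hop part costs at most $2k$.
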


\renewcommand{\arraystretch}{1.3}
\begin{table}[h!]
\centering
\small
\begin{tabular}{|p{4cm}|p{5.5cm}|p{5cm}|}
\hline
$\mathbf{f(d) = }$
  & \textbf{Size (edges)}
  & \textbf{Notes / Parameters} \\
\hline
\hline

$(2k-1)\cdot d$
  & $O(n^{1+1/k})$
  & Classic $(2k-1,0)$-spanner \\
\hline

$d + O_k\!\left(d^{1-\Theta(1/\log k)}\right)$
  & $O_k(n^{1+1/k})$
  & fixed $k\ge 2$ \cite{Tan23} \\
    \hline

$O(k^\varepsilon\cdot d + 2^{6/\varepsilon}\cdot k)$
  & $O\bigl(k^\varepsilon n^{1+1/k}
      +(2^{6/\varepsilon}k^{1+\varepsilon}+d)n\bigr)$
  & $0<\varepsilon<\frac12$, $k>2^{4/\varepsilon}$ \cite{New_ab_spanners} \\
\hline

$(3+\varepsilon)d
  +O\!\left(\tfrac{1}{\varepsilon}
    k^{\log(5+16/\varepsilon)}\right)$
  & $O\!\left(n^{1+1/k}
      +\tfrac{1}{\varepsilon}
       k^{\log(5+16/\varepsilon)}n\right)$
  & $0<\varepsilon<1$;
    $(3+\varepsilon,\beta)$-spanner
    \cite{New_ab_spanners} \\
\hline

$2k+O(d\log d)$
  & $O\bigl(n^{1+1/k}+(k+d\log d)n\bigr)$
  & \textbf{Theorem \ref{th:ab-spa}} \\
\hline

$2k+O(d)$
  & $O\bigl(n^{1+1/k}\bigr)$
  & $(O(1),O(k))$-spanner, \textbf{Conjectured} \\
\hline

When $d\le \frac{k}{\log k}$, $O(k)$
  & $O\bigl(n^{1+1/k}+kn\bigr)$
  & $(O(\log k),O(k))$-spanner,
    \textbf{Corollary \ref{co:logk-k-spa}} \\
\hline
\hline

When $d\ge 2$, $d\cdot k$
  & $O(k^2n^{1+1/k})$
  & $k$-hybrid spanner \cite{Bypassing} \\
\hline

When $d\ge 2$, $d\cdot k$
  & $O(n^{1+1/k}+kn)$
  & $k$-hybrid spanner,
    \textbf{Theorem \ref{th:2-spa}} \\
\hline
\end{tabular}
\caption{Distance bound $f(d)$ for pairs with $\dist_G(u,v)=d$,
edge count and key assumptions/parameters.}
\label{tab:spanner-summary}
\end{table}

\paragraph{Paper Organization.}
Section~\ref{sec:prelim} provides preliminary definitions and notations. Section~\ref{sec:ab-spa} presents the main spanner construction and proves its size and stretch guarantees, as stated in Theorem~\ref{th:ab-spa}. Section~\ref{sec:h-spa} describes the hybrid spanner construction and proves Theorem~\ref{th:2-spa}. Finally, Appendix~\ref{sec:apndx} shows how to implement the construction of Theorem~\ref{th:ab-spa} in expected time \(O(m \sqrt{n})\).

\section{Preliminaries} \label{sec:prelim}

We consider an undirected, unweighted graph $G = (V, E)$. Given a subgraph $H = (V(H), E(H))$ and a subset of vertices $A \subseteq V(H)$, we define the edge set $E_H(x, A) \subseteq E(H)$ as the set of edges with one endpoint at $x$ and the other in $A$, formally:
\(
E_H(x, A) = \{ (x, u) \in E(H) \mid u \in A \}.
\)
Furthermore, for any integer \( r \), we define \( B_H(x,r) \) as the set of vertices within distance at most \( r \) from \( x \) in \( H \), i.e.,  
\(
B_H(x,r) = \{ u \in V \mid \dist_H(x,u) \leq r \}.
\)
For brevity, we occasionally omit the graph subscript when the context is clear.

For a subset $V_i \subseteq V$, we denote by $G[V_i]$ the subgraph of $G$ induced by $V_i$. In particular, when referring to the induced subgraph $G[V_i]$, we use $\dist_i(v, u)$ as shorthand for $\dist_{G[V_i]}(v, u)$.

In the following sections, we occasionally perform a BFS from a set of vertices $C \subseteq V_i$. This can be implemented by adding a dummy vertex $s$ connected to every vertex in $C$, and then running a standard BFS from $s$ in the subgraph $G[V_i]$. This procedure runs in $O(m+n)$ time and produces a BFS forest in which each vertex of $G[V_i]$ is connected to its closest vertex in $C$ (ties between equidistant vertices may be broken arbitrarily).

\vspace{1em}
In this paper, we consider the following types of spanners:
\begin{definition}[$(\alpha, \beta)$-spanners]
    Given an undirected graph $G = (V, E)$, a subgraph $H \subseteq G$ is called an $(\alpha, \beta)$-spanner if, for every pair of vertices $(u, v) \in V \times V$, it holds that:
\begin{equation*}
    \dist_H(u, v) \leq \alpha \cdot \dist_G(u, v) + \beta.
\end{equation*}
\end{definition} 

\begin{definition}[$k$-hybrid spanners]
For a given graph $G = (V, E)$, a subgraph $H \subseteq G$ is called a $k$-hybrid spanner if, for every pair of vertices $(u, v) \in V \times V$, the following holds:
\begin{equation*}
    \dist_H(u, v) \leq 
    \begin{cases}
        (2k - 1) \cdot \dist_G(u, v), & \text{if } (u, v) \in E(G); \\
        k \cdot \dist_G(u, v), & \text{otherwise}.
    \end{cases}
\end{equation*}
\end{definition}

\section{The $ 2k + O(d\ld) $ Spanner Construction} \label{sec:ab-spa}

This section establishes Theorem~\ref{th:ab-spa} by presenting a randomized construction and analyzing its stretch and size guarantees. Given an undirected, unweighted graph \(G = (V, E)\) and two integers \(k, d \in \mathbb{N}\) with \(d \leq k\), the construction outputs a spanner \(H\) that, in expectation, has \(O(n^{1+1/k} + (k + d \log d) n)\) edges, and ensures that for every pair of vertices at distance \(\leq d\) in \(G\), their distance in \(H\) is at most \(2k + O(d \log d)\). Moreover, by setting \(d = \tfrac{k}{\log k}\), the spanner \(H\) achieves stretch parameters \((O(\log k), O(k))\) with \(O(n^{1+1/k} + kn)\) edges.

We defer the running time analysis to Appendix~\ref{sec:apndx}, where we describe a modified version of the algorithm that achieves an expected runtime of \(O(m \sqrt{n})\).

We begin by assuming that $d$ is a power of $2$ and that $d$ divides $k$. If that is not the case, let $\hat{d} \in [d, 2d)$ be the smallest power of $2$ that is larger than $d$, and $\hat{k} \in [k, k + \hat{d})$ the smallest integer such that $\hat{d}$ divides $\hat{k}$. Running the algorithm with $\hat{d}, \hat{k}$, ensures that the resulting spanner $H$ has size
\(O(n^{1+1/\hat{k}}+(\hat{k}+\hat{d}\log\hat{d})n)=O(n^{1+1/k}+(k+d\ld)n)\).
Moreover, for every pair at distance at most $d$ in $G$, their distance in $H$ is at most
\(2\hat{k}+O(\hat{d}\log\hat{d})=2k+O(d\ld)\). We will revisit this adjustment at the end of the section and provide the explicit constants.

\subsection{High-Level Intuition (Setting \(d = k\))}

We start with a warm-up that shows a construction of a spanner $H$ that provides an \(O(\log k)\)-multiplicative stretch for every path of length \(k\), using only \(O(n^{1+1/k} + k \log k \cdot n)\) edges. In other words, $H$ is an $(O(\log k), O(k \log k))$-spanner. This warm-up already improves previous results which, for any constant \(\eps>0\) and \(k>2^{4/\eps}\), construct a spanner \(H\) with \(O(k^\eps)\)-multiplicative stretch for every path of length \(k\), using $O_{\epsilon,k}(n^{1+1/k})$ edges \cite{New_ab_spanners}.

In this warm-up, we focus on the intuition behind the construction and omit formal proofs. The general construction for arbitrary $d$ builds upon these ideas. More precisely, it consists of two phases, where the first phase is very close to the algorithm presented in this warm-up.

We initialize \(H = (V, \emptyset)\) and set \(V_0 = V\).  For a general \(i\), we write
\[
  \dist_i(u,v) \;:=\; \dist_{G[V_i]}(u,v),
  \quad
  B_i(x,r) \;:=\; \{\,y\in V_i : \dist_i(x,y)\le r\},
\]
so that all distances and balls in iteration \(i\) are taken with respect to the induced subgraph \(G[V_i]\), where $V_i$ (formally defined later) is the set of vertices that are still uncovered in iteration $i$. That is, vertices $v$ for which there may still exist paths of length $k$ that contain $v$, and the constructed spanner does not yet contain a suitable alternative path. The algorithm proceeds for exactly \(\log k\) iterations, assuming that \(k\) is a power of two. In each iteration, the algorithm randomly selects a set of \emph{centers}, and adds edges to $H$ in two ways: (1) it adds a shortest path between every two centers that are sufficiently close and (2) it adds a shortest path between every vertex to its closest center. Afterwards, it removes from the graph all vertices whose distance to their nearest center is below some threshold.
We begin by describing the process in the first iteration, and then generalize to subsequent iterations.

\paragraph{Iteration 0.}
The algorithm samples a random set \(C_0 \subseteq V_0\), where each vertex is selected independently with probability \(1/\sqrt{n}\). Thus, \(\mathbb{E}[|C_0|] = \sqrt{n}\). It updates the spanner \(H\) in two steps:

\begin{enumerate}
    \item[$\bullet$] \textbf{Link centers:} For each pair \(c, c' \in C_0\), if their distance in \(G[V_0] (=G)\) is at most \(r_0 = O(k \log k)\), it adds the shortest path between them to \(H\).
    
    \item[$\bullet$] \textbf{Link vertices to centers:} It performs a BFS from the vertex set \(C_0\) over \(G[V_0]\), and adds the resulting BFS forest to \(H\). This means that for every vertex \(v \in V_0\), the spanner \(H\) contains a shortest path from \(v\) to its closest center in \(C_0\) (ties are broken arbitrarily). 
\end{enumerate}

These steps add at most \(O(k\log k\cdot n)\) edges to \(H\): there are at most \(|C_0|^2=O(n)\) eligible pairs, each contributing a path of length \(O(k\log k)\). The BFS adds \(O(n)\) more edges.

\paragraph{Stretch Analysis.}
Define \( r_1 = \tfrac{1}{2}r_0 - k \); intuitively, this is slightly less than half the radius threshold \( r_0 \) used to link centers.

Let \( P \subseteq G[V_0] \) be a path of length \( |P| = k \), connecting two vertices \( s, t \in V_0 \). Suppose there exists a vertex \( x \in P \) such that \( \dist_0(x, C_0) \leq r_1 \). We claim that \(P\) already has an \( O(\log k) \)-stretch in the spanner \(H\).

For simplicity, we illustrate the argument assuming \( x = s \). The general case, where \(x\) lies anywhere along the path \(P\), is similar but slightly more technical; we defer its full justification to the formal analysis later on. 

Assume \( \dist_0(s, C_0) \leq r_1 \). Let \( c_0(s), c_0(t) \in C_0 \) be the closest centers to \(s\) and \(t\), respectively, as established by the \textbf{Link vertices to centers} step. Then:
\begin{enumerate}
    \item \(\dist(t, c_0(t)) \leq \dist(t, c_0(s)) \leq \dist(t,s) + \dist(s, c_0(s)) \leq k + r_1 = \tfrac{1}{2}r_0\),
    
    \item So
    \[\dist(c_0(s), c_0(t)) \leq  \dist(c_0(s), s) + \dist(s,t)+ \dist(t, c_0(t)) \leq r_1 + k + \tfrac{1}{2}r_0 = r_0,\]
    which means that the shortest path between \(c_0(s)\) and \(c_0(t)\) was added to the spanner \(H\) during the \textbf{Link centers} step. This was exactly the reason for setting $r_1 = \tfrac{1}{2}r_0 - k$.
    \item Therefore, we get:
    \[
    \dist_H(s, t) \leq \dist(s, c_0(s)) + \dist(c_0(s), c_0(t)) + \dist(c_0(t), t) \leq 4r_1 + 3k = O(k \log k) = O(\log k)\cdot |P|.
    \]
\end{enumerate}

See figure \ref{fig:intuition}.

\begin{figure}[h!]
    \centering
\begin{tikzpicture}[scale=1.5, every node/.style={font=\small}]

\coordinate (s) at (0,0);
\coordinate (t) at (6,0);
\coordinate (cs) at (-1,2);
\coordinate (ct) at (7,2);

\draw[thick, black] (s) -- (cs) node[midway, left] {$\leq r_1$};
\draw[thick, black] (cs) -- (ct) node[midway, above] {$\leq r_1 + k + (r_1+k) = r_0$};
\draw[thick, black] (ct) -- (t) node[midway, right] {$\leq r_1+k$};
\draw[thick, magenta, dashed] (s) -- (t) node[midway, below right, magenta] {$k$};

\filldraw[black] (s) circle (2pt) node[below] {$s$};
\filldraw[black] (t) circle (2pt) node[below] {$t$};
\filldraw[black] (cs) circle (2pt) node[above] {$c_0(s)$};
\filldraw[black] (ct) circle (2pt) node[above] {$c_0(t)$};

\end{tikzpicture}
\caption{Let $P$ be a path between $s,t$ of length $k$. Assuming that $\dist(s,c_0(s)) \leq r_1$, we can prove that the path $P' = s \to c_0(s) \to c_0(t) \to t$ is entirely in $H$ and of length $ |P'| \leq 4r_1 + 3k$. The black paths are in $H$, and the dashed is in $G$.}
\label{fig:intuition}

\end{figure}

We conclude that every $x \in V_0$ such that $\dist_0(x, C_0) \leq r_1$ can already be considered \emph{handled} — all paths of length \(k\) passing through it are well-approximated in the spanner. Thus, the final step of iteration \(0\) is:

\paragraph{Prune Step.}
\begin{enumerate}
    \item[$\bullet$] \textbf{Prune:} The algorithm removes all such vertices and defines \(V_1 \subseteq V_0\) to be the set of remaining vertices, i.e:
    \(
    V_1 = \{ x\in V_0 \mid \dist_0(x, C_0) > r_1 \}.
    \)
\end{enumerate}

The algorithm sampled \(O(\sqrt{n})\) centers during this iteration and removed all vertices \(x\) for which \(\dist(x, C_0) \leq r_1\). This implies that the remaining vertices in \(V_1\) are sparse with respect to radius \(r_1\); that is, for every \(x \in V_1\), the ball \(B_1(x, r_1)\) contains at most \(O(\sqrt{n})\) vertices in expectation.

This sparsity allows the algorithm to gradually sample more and more centers in later iterations while still keeping the number of added edges small. For example, in iteration 1, it samples \(|C_1| = O(n^{3/4})\) centers, and connects each pair \(c, c' \in C_1\) if \(\dist_1(c, c') \leq r_1\). Although this is a larger number of centers than in the previous iteration, the remaining vertices in \(V_1\) are sparse: for each center \(c \in C_1\), the ball \(B_1(c, r_1)\) contains only \(O(\sqrt{n})\) vertices in expectation. Hence the number of eligible center pairs is \(O(n)\) in expectation, and each contributes a path of length \(O(r_1)\), yielding \(O(r_1 n)\) an addition of edges in expectation to $H$.

Generally, in iteration \(i < \log k\), it samples each vertex in \(V_i\) independently with probability \(n^{-1/2^{i+1}}\), yielding a center set \(C_i\) of expected size \(n^{1 - 1/2^{i+1}}\). Despite the growing number of centers, the expected number of edges added to \(H\) remains \(O_k(n)\).

\paragraph{Iteration \(i\).}
This process continues for \(i = 0,1,\dots,\log k - 1\), where \(V_i\) denotes the current vertex set. The purpose of iteration \(i\) is to approximate paths of length at most \(\tfrac{k}{2^i}\) in the subgraph \(G[V_i]\).

\begin{itemize}
    \item \textbf{Sample $C_i$:} It sets $C_i$ to be the set obtained by sampling each vertex in \(V_i\) independently with probability \(n^{-1/2^{i+1}}\), so \(|C_i| \leq n^{1 - 1/2^{i+1}}\) in expectation.
    \item \textbf{Link centers:} For every two vertices $c,c' \in C_i$ whose distance is at most $r_i$ in $G[V_i]$, it adds a shortest path between them to $H$.
    \item \textbf{Link vertices to centers:} It performs a BFS from the vertex set \(C_i\) over \(G[V_i]\), and adds the BFS forest to $H$.
    \item \textbf{Prune:} It defines radius \(r_{i+1} = \tfrac{1}{2}r_i - \tfrac{k}{2^i}\), and removes all vertices \(x \in V_i\) with \(\dist_i(x, C_i) \leq r_{i+1}\), and sets \(V_{i+1}\) to be the remaining vertex set (where $\dist_i := \dist_{G[V_i]}$).
\end{itemize}

We choose the recursive relation \(r_{i+1} = \tfrac{1}{2}r_i - \tfrac{k}{2^i}\) to ensure that, in each iteration, any path \(P \subseteq G[V_i]\) of length at most \(\tfrac{k}{2^i}\) that passes through a vertex \(x \in P\) with \(\dist_i(x, C_i) \leq r_{i+1}\), can be approximated in \(H\) via a detour through the centers with only an \(O(\log k)\)-multiplicative stretch. 

Specifically, let \(s,t \in P\) be the endpoints of the path, and let \(c_i(s), c_i(t) \in C_i\) be the closest centers to \(s\) and \(t\), respectively. Since \(x\) lies within distance \(r_{i+1}\) of a center and \(|P| \leq \tfrac{k}{2^i}\), one can prove that 
\( \dist_i(c_i(s), c_i(t)) \leq r_i. \)
Thus, during the \textbf{Link centers} step, a shortest path between \(c_i(s)\) and \(c_i(t)\) was added to \(H\). The detour path \(P' := s \to c_i(s) \to c_i(t) \to t\) is entirely contained in \(H\) and provides an \(O(\log k)\)-stretch for \(P\).

Here, we focus on approximating paths of length \(\tfrac{k}{2^i}\) rather than \(k\), because the algorithm repeats this process for \(\log k\) iterations. If we were to approximate paths of length \(k\) in each iteration, the recursive relation for the radius parameter would become \(r_{i+1} = \tfrac{1}{2}r_i - k\). Even assuming \(r_{\log k} = 0\), this would imply that \(r_0 = O(k^2)\), which is prohibitively large. 
To avoid this blow-up, we progressively targeted shorter paths in each iteration.

\paragraph{Final Step.}
After \(\log k\) iterations, we reach a set \(V_{\log k}\) in which, for every vertex \(x\), the ball \(B_{\log k}(x, r_{\log k})\) contains at most \(\frac{n}{|C_{\log k-1}|}=O(n^{1/k})\) vertices in expectation. Setting \(r_{\log k} = 1\), we ensure that the degree of each vertex in \(G[V_{\log k}]\) is at most \(O(n^{1/k})\) in expectation, so the algorithm simply adds all edges in \(G[V_{\log k}]\) to \(H\), completing the construction.

Altogether, over \(\log k\) rounds, \(\sum_i O(r_i n)=O(k\log k \cdot n)\) edges are added during \textbf{Link centers} steps plus $O(\log k \cdot n)$ during \textbf{Link vertices to centers} steps, plus $O(n^{1+1/k})$ in the last cleanup, giving overall $O(n^{1+1/k} + k \log k \cdot n)$ edges that are in $H$.

\paragraph{Stretch Analysis.} In each iteration \(i < \log k\), we constructed \(H\) to guarantee an \(O(\log k)\)-multiplicative stretch for every path of length \(\tfrac{k}{2^i}\) that includes a vertex \(x \in V_i \setminus V_{i+1}\), i.e., one that was removed during that iteration.

We now argue that \emph{every} path of length \(\tfrac{k}{2^i}\) in the subgraph \(G[V_i]\) is approximated in \(H\) with stretch \(O(\log k)\). We prove this by induction, going from \(i = \log k\) down to \(i = 0\). Notably, the case \(i = 0\) implies that every path of length \(k\) in \(G\) is approximated within an \(O(\log k)\) factor in \(H\), as desired.

\smallskip
\noindent\textbf{Base case (\(i = \log k\)):} At the final iteration, we added all edges of the induced subgraph \(G[V_{\log k}]\) to \(H\), so every path in \(G[V_{\log k}]\) is preserved exactly.

\smallskip
\noindent\textbf{Inductive step:} Assume that for iteration \(i+1\), every path of length \(\tfrac{k}{2^{i+1}}\) in \(G[V_{i+1}]\) has an \(O(\log k)\)-stretch in \(H\). Consider an arbitrary path \(P \subseteq G[V_i]\) of length \(\tfrac{k}{2^i}\). We distinguish two cases:
\begin{itemize}
  \item If \(P \subseteq G[V_{i+1}]\), then we can partition \(P\) into two subpaths, each of length \(\tfrac{k}{2^{i+1}}\). By the inductive step, each subpath is approximated in \(H\) with stretch \(O(\log k)\), so their concatenation gives an \(O(\log k)\)-approximate path for \(P\).
  
  \item Otherwise, \(P\) contains some vertex \(x \in V_i \setminus V_{i+1}\) that was pruned in iteration \(i\), i.e., \(\dist_i(x, C_i) \leq r_{i+1}\). But then, by the covering guarantees of that iteration, the entire path \(P\) is approximated in \(H\) with \(O(\log k)\) stretch.
\end{itemize}

This completes the inductive proof that all length-\(k\) paths are well-approximated in \(H\), establishing the claimed stretch bound. One can show that $H$ is an $(O(\log k), O(k \cdot \log k))$ spanner.

\medskip

\paragraph{The Construction Overview}

Our algorithm consists of two phases. The first phase closely follows the high-level intuition described earlier, with a few necessary adjustments. In the general setting, our goal is to approximate all paths of length at most $d \leq k$. Accordingly, the first phase now proceeds for $\log d$ iterations, where in iteration $i$, the aim is to handle, that is, to approximate, all paths of length at most $d / 2^i$ in the current residual graph $G[V_i]$.

However, we cannot stop after the first phase. In particular, after only \(\log d\) iterations, the residual subgraph \(G[V_{\log d}]\) may still be dense (we did not perform enough pruning rounds to force \(O(n^{1+1/k})\) total edges), so a second sparsification phase is required.

\subsection{The Construction - First Phase}
In this subsection we present the first phase of the algorithm which computes a subgraph $H \subseteq G$ with the following properties:
\begin{itemize}
  \item every original path of length at most $d$ in $G$ has been approximated in $H\cup G[V_{\log d}]$ with length $2k + O(d\log d)$, and
  \item the expected number of edges in $H$ is only $O\bigl((k + d\log d)\,n\bigr)$.
\end{itemize}

We now describe the procedure. It closely follows the high-level intuition, with three key adjustments: (1) we initialize the radius parameter to 
\(r_0 = k + 2\,d\log d,\) and (2) recurse via
\[
  r_{i+1} \;=\; \tfrac12\,r_i \;-\;\tfrac{d}{2^i},
\]
and (3) the algorithm runs exactly $\log d$ iterations. Now, in iteration $i$ our goal is to handle (i.e. approximate) paths of length at most $d/2^i$ in the current residual graph $G[V_i]$.

\textbf{First Phase.}
The algorithm begins by initializing $H = (V, \emptyset)$ and setting $r_0 = k + 2 d \log d$. For each iteration $0 \leq i < \log d$, starting with $i = 0$, the algorithm operates on the vertex set $V_i$ and proceeds as follows:
\begin{itemize}
    \item \textbf{Sample $C_i$:} It sets $C_i$ to be the set obtained by sampling each vertex in \(V_i\) independently with probability \(n^{-1/2^{i+1}}\).
    \item \textbf{Link centers:} For every two vertices $c,c' \in C_i$ whose distance is at most $r_i$ in $G[V_i]$, it adds a shortest path between them to $H$.
    \item \textbf{Link vertices to centers:} It performs a BFS from the centers \(C_i\) over the subgraph \(G[V_i]\) and adds the resulting BFS forest edges to $H$.
    \item \textbf{Prune:} It defines radius \(r_{i+1} = \tfrac{1}{2}r_i - \tfrac{d}{2^i}\), and removes all vertices \(x \in V_i\) with \(\dist_i(x, C_i) \leq r_{i+1}\). Then, it sets \(V_{i+1}\) to be the remaining vertex set.
\end{itemize}

Once all iterations complete, the algorithm returns \(H\) and \(V_\ld\). The pseudo-code is in Algorithm~\ref{al:spaOne}.

\SetAlgoNoEnd
\begin{algorithm}[H]
\label{al:spaOne}

Set $ H = ( V, \emptyset ) $, $V_0 = V$, and $ r_0 = k + 2 d \ld $.

\tcp{\textbf{First Phase}}

\For{$i=0$ \KwTo $\ld-1$}
{
\tcp{(1) Sample $C_i$\quad(2) Connect centers $\le r_i$\quad(3) BFS from $C_i$ \quad(4) Prune}
   Sample each vertex of $V_i$ independently with probability $n^{-1/2^{i+1}}$ to form $C_i$;\\
   For every $c,c'\in C_i$ with $\dist_i(c,c')\le r_i$, add a shortest path between them in $G[V_i]$ to $H$;\\
   Run a BFS from $C_i$ in $G[V_i]$ and add the resulting BFS forest to $H$;\\
   Set $r_{i+1}=\tfrac12 r_i-\tfrac{d}{2^i}$ and \(V_{i+1}=\{\,x\in V_i \mid \dist_i(x,C_i)>r_{i+1}\,\}\).

}

\Return $(H, V_{\log d})$

\caption{\spaOne receives a graph \( G \) and integers \( d, k \) such that \( d | k \) and \( d \) is a power of $2$. It returns a subgraph \( H \subseteq G\) with expected size \( O\big((k + d\log d)n\big) \), and a subset of the vertices $ V_\ld \subseteq V$.}
\end{algorithm}

\subsubsection{Analysis}

We start by analyzing the algorithm \spaOne. We begin by bounding the expected number of edges in $H$ at the end of the algorithm. To do so, we need the following claim.

\begin{lemma}
\label{cl:size_gen}
Let $ 0 \leq i < \log d $, and let $V_i$ be the vertex set at the beginning of iteration $i$. Then, for every vertex $ x \in V_i $, we can bound the expected size of the ball $B_{i+1}(x,r_{i+1})$:

\begin{equation*}
\mathbb{E}_{C_i} \left[ \lvert B_{i+1}(x, r_{i+1}) \rvert \right] < \frac{1}{e} n^{\sfrac{1}{2^{i+1}}},
\end{equation*}
where we set $ \lvert B_{i+1}(x, r_{i+1}) \rvert = 0 $ if $ x \notin V_{i+1} $. 

Consequently,
\[
\mathbb{E}_{C_{\ld-1}}\!\left[\,|B_{\ld}(x,r_\ld)|\,\right] < n^{1/2^{\ld}} = n^{1/d}.
\]
\end{lemma}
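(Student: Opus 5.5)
The approach is a standard ``closest-sample'' argument over the ball of $x$ in the previous residual graph. Condition on $V_i$, so that the only randomness is the choice of $C_i$. Write $p = n^{-1/2^{i+1}}$ and fix $x \in V_i$. If $x \notin V_{i+1}$, its contribution is $0$ by convention, so we only need to track the event $x \in V_{i+1}$. The first step is a monotonicity observation. Since $V_{i+1}\subseteq V_i$, distances in $G[V_{i+1}]$ are at least those in $G[V_i]$. Hence $B_{i+1}(x,r_{i+1}) \subseteq B_i(x,r_{i+1})\cap V_{i+1}$, and it suffices to bound $|B_i(x,r_{i+1})|\cdot \mathbf{1}[x\in V_{i+1}]$.

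The key step is the identity $\{x\in V_{i+1}\} = \{B_i(x,r_{i+1})\cap C_i = \emptyset\}$, which holds by the definition of the prune step. Let $m = |B_i(x,r_{i+1})|$; this quantity is deterministic given $V_i$. Each vertex of $V_i$ is sampled independently with probability $p$, so
\[
\mathbb{E}_{C_i}\bigl[|B_{i+1}(x,r_{i+1})|\bigr] \le m\,(1-p)^m \le m e^{-pm}.
\]
Maximizing $f(m)=m e^{-pm}$ over $m\ge 0$ places the maximum at $m = 1/p$, giving the value $\frac{1}{ep} = \frac1e n^{1/2^{i+1}}$. To obtain the strict inequality, use $(1-p)^m < e^{-pm}$ for $m\ge1$ and $p\in(0,1)$, together with the trivial case $m=0$. (Note that $x\in B_i(x,r_{i+1})$, so in fact $m\ge1$.)

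The consequence follows by applying the bound with $i=\log d-1$, which gives $\frac1e n^{1/2^{\log d}} = \frac1e n^{1/d} < n^{1/d}$. Because the bound holds for every fixed $V_{\log d-1}$ and every $x$ in it, it also survives taking expectation over the earlier iterations if desired.

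The main subtlety, rather than a real obstacle, is the correct handling of conditioning and of distances in the shrinking induced subgraph. One must measure the ball in $G[V_i]$, where the pruning criterion is defined, rather than in $G[V_{i+1}]$, and use the containment above to pass between them. One must also make sure that $m$ is independent of $C_i$, which holds because it depends only on $V_i$.
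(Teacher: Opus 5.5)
Your proposal is correct and follows the paper's argument: bound the ball by $m=|B_i(x,r_{i+1})|$ times the probability $(1-p)^m$ that no vertex of it is sampled, then maximize $m e^{-pm}$ at $m=1/p$. Two points the paper leaves implicit are made explicit in your version: the containment $B_{i+1}(x,r_{i+1})\subseteq B_i(x,r_{i+1})$, and the use of $1-p<e^{-p}$ (with $m\ge 1$) to justify the strict inequality.
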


\begin{proof}
Let $ y \in B_i(x, r_{i+1}) $. If $ y \in C_i $, then the algorithm deleted $x$ during the iteration $i$, so $ \lvert B_{i+1}(x, r_{i+1}) \rvert = 0 $. The probability that $ y \notin C_i $ is exactly $ 1 - n^{-(1/2)^{i+1}} $, so we can bound the expectation of $ \lvert B_{i+1}(x, r_{i+1}) \rvert $:

\begin{align*}
\mathbb{E}_{C_i} \left[ \lvert B_{i+1}(x, r_{i+1}) \rvert \right] &\leq \left( 1 - n^{-(1/2)^{i+1}} \right)^{\lvert B_i(x, r_{i+1}) \rvert} \cdot \lvert B_i(x, r_{i+1}) \rvert \\
&\leq \lvert B_i(x, r_{i+1}) \rvert \cdot e^{- \lvert B_i(x, r_{i+1}) \rvert / n^{\sfrac{1}{2^{i+1}}}}.
\end{align*}

Now, by setting $ z = \lvert B_i(x, r_{i+1}) \rvert / n^{\sfrac{1}{2^{i+1}}} $, we can see that we get an equation of 
\begin{equation*}
f(z) = n^{\sfrac{1}{2^{i+1}}} \cdot z \cdot e^{-z}.
\end{equation*}

A standard derivative check shows that \(f\) is maximized at \(z=1\), giving \(f(1)=\tfrac1e\,n^{1/2^{i+1}}\).
By plugging it back, we get
\begin{equation*}
\mathbb{E}_{C_i} \left[ \lvert B_{i+1}(x, r_{i+1}) \rvert \right] < \frac{1}{e} n^{\sfrac{1}{2^{i+1}}}.
\end{equation*}

\end{proof}

Next, we analyze the expected size of the spanner $H$.

\begin{lemma}
\label{le:size1}
In expectation, the number of edges added to $H$ satisfies
\[
\sum_{i=0}^{\ld-1}\!\bigl(r_i n/(2e)+n\bigr)=O(r_0 n).
\]
\end{lemma}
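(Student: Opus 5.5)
We bound the edges contributed by each of the three steps of every iteration $0\le i<\log d$ separately and then sum over $i$. The \textbf{Link vertices to centers} step adds a BFS forest on $V_i$, hence at most $n$ edges per iteration. The \textbf{Link centers} step adds at most $r_i$ edges per linked pair, since each added path is a shortest path in $G[V_i]$ of length at most $r_i$. It therefore suffices to show that the expected number of linked pairs in iteration $i$ is at most $n/(2e)$.

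\textbf{Counting linked pairs, $i\ge 1$.} We condition on the outcome of iterations $0,\dots,i-1$, which fixes $V_i$. Each linked pair $\{c,c'\}$ is counted as an ordered pair and the total is halved. For a fixed $x\in V_i$, the event $x\in C_i$ has probability $p_i=n^{-1/2^{i+1}}$. Given that event, the partners $c'$ lie in $B_i(x,r_i)$. Since $B_i(x,r_i)$ is determined by earlier randomness, and each other vertex is sampled independently with probability $p_i$, the expected number of partners is at most $p_i\,|B_i(x,r_i)|$.

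Now apply Lemma~\ref{cl:size_gen} at index $i-1$, which gives $\mathbb{E}_{C_{i-1}}[|B_i(x,r_i)|]<\frac1e n^{1/2^{i}}$. Here $|B_i(x,r_i)|$ is taken to be $0$ when $x$ was pruned, which is consistent because then $x\notin V_i$ contributes nothing. The lemma is stated for the randomness of $C_{i-1}$ conditioned on $V_{i-1}$, so taking expectations via the tower rule over earlier iterations preserves the bound.

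Summing over $x\in V$, the expected number of ordered pairs is at most
\[
\sum_x p_i^2\,\mathbb{E}[|B_i(x,r_i)|] \le n\cdot n^{-1/2^{i}}\cdot \tfrac1e n^{1/2^{i}} = n/e,
\]
using $p_i^2=n^{-1/2^i}$. Halving gives $n/(2e)$ unordered pairs, and so at most $r_i n/(2e)$ edges. The point to verify carefully is that independence is used only for the fresh sample $C_i$.

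\textbf{The case $i=0$.} Here $p_0=n^{-1/2}$ and the trivial bound $|B_0(x,r_0)|\le n$ applies. The same computation gives at most $n\cdot p_0^2\cdot n/2=n/2$ pairs. This is slightly more than $n/(2e)$, so I expect this to be the main obstacle for the exact constant. I would handle it by noting that the count of pairs is at most $\binom{|C_0|}{2}$, with expectation at most $n/2$, and absorbing the discrepancy of the $i=0$ term into the $O(\cdot)$. Alternatively, one could treat $B_0$ as being at scale $n^{1/2^0}/e$ in spirit. Either way the $O(r_0 n)$ conclusion holds.

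\textbf{Summing over iterations.} It remains to show $\sum_i r_i=O(r_0)$. Since $r_{i+1}\le r_i/2$, the $r_i$ decay geometrically, so $\sum_i r_i\le 2r_0$. The BFS contributions add $n\log d\le n\cdot 2d\log d\le n r_0$ in total. Hence the expected total is $\sum_i (r_i n/(2e)+n)=O(r_0 n)$, and with $r_0=k+2d\log d$ this is $O((k+d\log d)n)$.
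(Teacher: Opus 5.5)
Your proof is correct and follows the paper's argument: each iteration contributes at most $n$ BFS edges plus $r_i$ times the expected number of linked center pairs. That pair count is bounded via the sampling probability and Lemma~\ref{cl:size_gen} at index $i-1$, and the sum closes by geometric decay of $r_i$. Your separate treatment of $i=0$ is more careful than the paper, whose proof silently applies the $\frac1e n^{1/2^i}$ ball bound at $i=0$, where only $|B_0(c,r_0)|\le n$ holds; that iteration can therefore contribute about $r_0 n/2$ rather than $r_0 n/(2e)$ edges, which, as you note, does not affect the $O(r_0 n)$ conclusion.
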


\begin{proof} We show that in each iteration $i$, the expected number of edges added to $H$ is at most $r_i n / (2e) + n$.

In each iteration, edges are added from two steps: (1) \textbf{Link Centers} step, by adding shortest paths between every pair $c, c' \in C_i$ with $\dist_i(c, c') \leq r_i$, and (2) \textbf{Link vertices to centers} step, by performing a BFS from $C_i$ in $G[V_i]$, which contributes at most $n$ edges. Thus, it suffices to bound the expected number of edges added via the shortest paths between centers.

For any center $c \in C_i$, a path is added to another center $c' \in C_i$ only if $c' \in B_i(c, r_i)$. Hence, the number of paths for which $c$ is responsible is $|B_i(c, r_i) \cap C_i \setminus \{c\}|$, and the maximum length of a path is $r_i$. The expected contribution of such paths to the size of $H$ is at most
\begin{equation*}
    r_i \cdot \frac{1}{2} \sum_{c \in C_i} \e \left[ |B_i(c, r_i) \cap C_i \setminus \{ c \} | \right].
\end{equation*}

So it suffices to prove that
\begin{equation*}
\sum_{c \in C_i} \e[ |B_i(c, r_i) \cap C_i \setminus \{ c \} |] \leq n/e.
\end{equation*}

Since centers are chosen independently at random, for each $c \in C_i$, the expected number of other centers in $B_i(c, r_i)$ equals the expected ball size times the sampling probability:
\begin{equation*} \mathbb{E}_{C_i \setminus \{c\}} \left[ |B_i(c, r_i) \cap C_i \setminus \{ c \}| \right] = \mathbb{E} \left[ |B_i(c, r_i)| \right] \cdot n^{-\sfrac{1}{2^{i + 1}}}. \end{equation*}

Summing over all centers gives:
\begin{equation*} \begin{aligned} \sum_{c \in C_i} \mathbb{E} \left[ |B_i(c, r_i) \cap C_i \setminus \{ c \} | \right] & \leq \sum_{c \in C_i} \mathbb{E} \left[ |B_i(c, r_i)| \right] \cdot n^{-\sfrac{1}{2^{i + 1}}} & = \\ \mathbb{E}[|C_i|] \cdot \max_{c} \mathbb{E}[|B_i(c, r_i)|] \cdot n^{-\sfrac{1}{2^{i + 1}}} & \leq \frac{1}{e} \cdot n^{1 - \sfrac{1}{2^{i + 1}} + \sfrac{1}{2^i} - \sfrac{1}{2^{i + 1}}} = \frac{1}{e}n. \end{aligned} \end{equation*}

\end{proof}

We now turn to analyze the stretch. The first claim states that if some vertex $x \in V_i$ was removed during iteration $i$, all paths containing $x$ of length at most $d/2^i$ must have a good approximation in $H$:

\begin{lemma}
\label{cl:stretch_gen} Let $0 \leq i < \ld$ and let $x \in V_i \setminus V_{i+1}$ be a vertex removed during iteration $i$, meaning that $\dist_i(x, C_i) \leq r_{i+1}$. Then for every path $P \subseteq G[V_i]$ of length $|P| \leq \frac{d}{2^i}$ containing $x$, there exists a corresponding path $P' \subseteq H$ between the same endpoints such that
\begin{equation*}
|P'| \leq 4 \cdot r_{i+1} + 3 \cdot |P|,
\end{equation*}
implying that the stretch is at most
$
\frac{|P'|}{|P|} \leq 4 \cdot \frac{r_{i+1}}{|P|} + 3.
$
\end{lemma}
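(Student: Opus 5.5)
We prove the bound with the same detour $P' = s \to c_i(s) \to c_i(t) \to t$ as in the warm-up. Here $s,t$ are the endpoints of $P$, and $c_i(s), c_i(t)$ are the centers assigned to $s$ and $t$ by the BFS forest of the \textbf{Link vertices to centers} step in iteration $i$. Both $s$ and $t$ lie in $V_i$, because $P \subseteq G[V_i]$, so both are reached by this BFS. We may assume they have centers at all: $x$ has $\dist_i(x,C_i)\le r_{i+1}<\infty$, and $P$ connects $s$ and $t$ to $x$ inside $G[V_i]$. The forest contains a shortest path in $G[V_i]$ from $s$ to $c_i(s)$, and likewise from $t$ to $c_i(t)$. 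So the two outer segments of $P'$ lie in $H$, with lengths $\dist_i(s,C_i)$ and $\dist_i(t,C_i)$.

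The first step bounds these lengths using $x$. Write $a=\dist_P(s,x)$ and $b=\dist_P(x,t)$, so $a+b=|P|$. Let $c^*$ be a center with $\dist_i(x,c^*)\le r_{i+1}$. Since $P\subseteq G[V_i]$, the triangle inequality in $G[V_i]$ gives
\[
\dist_i(s,c_i(s)) \le \dist_i(s,c^*) \le a + r_{i+1},
\]
and similarly $\dist_i(t,c_i(t)) \le b + r_{i+1}$.

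The second step checks that the middle segment of $P'$ was added by \textbf{Link centers}. We have
\[
\dist_i(c_i(s),c_i(t)) \le (a+r_{i+1}) + |P| + (b+r_{i+1}) = 2r_{i+1} + 2|P|.
\]
Using $|P|\le d/2^i$ and the recursion $r_{i+1} = \tfrac12 r_i - \tfrac{d}{2^i}$, the right-hand side is at most $2r_{i+1} + 2d/2^i = r_i$. Hence a shortest path between $c_i(s)$ and $c_i(t)$ in $G[V_i]$ lies in $H$. If $c_i(s)=c_i(t)$, this segment is empty and nothing is needed.

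Summing the three segments gives
\[
|P'| \le (a+r_{i+1}) + (2r_{i+1}+2|P|) + (b+r_{i+1}) = 4r_{i+1} + 3|P|.
\]
Dividing by $|P|$ gives the stated stretch bound.

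The main point of care is that every distance is measured in $G[V_i]$. The link threshold $r_i$, the BFS, and the pruning condition all use $\dist_i$, so the triangle inequalities are only valid because $P$ lies entirely in $G[V_i]$. Placing $x$ at an arbitrary position on $P$ costs nothing: the two offsets $a$ and $b$ sum to $|P|$ and appear exactly once each in the final bound.
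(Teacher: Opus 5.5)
Your proposal is correct and follows the paper's proof: the same detour $s \to c_i(s) \to c_i(t) \to t$, and the same route through a center near $x$ to get $\dist_i(c_i(s),c_i(t)) \le 2r_{i+1}+2|P| \le r_i$. The total is likewise $4r_{i+1}+3|P|$. Using an arbitrary center $c^*$ within $r_{i+1}$ of $x$ instead of $c_i(x)$, and tracking the offsets $a,b$ separately, are only cosmetic differences.
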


\begin{proof}
    Let $x \in V_i$ be a vertex such that $\dist_i(x, C_i) \leq r_{i+1}$, and let $P$ be a path between some vertices $s, t$ with length $|P| \leq \frac{d}{2^i}$ that passes through $x$. We define $c_i : V_i \to C_i$ as a function that assigns to each vertex the closest vertex in $C_i$ that was connected to it in the BFS forest in iteration $i$ in the \textbf{Link centers to vertices} step.

To prove the claim, we construct a path in $H$ between $s$ and $t$ that passes through their nearest centers, $c_i(s)$ and $c_i(t)$. We show that this path is contained in $H$ and that it satisfies the length bound stated in the lemma.

We start by proving that $\dist_i(c_i(s), c_i(t)) \leq r_i$, which ensures that $c_i(s)$ and $c_i(t)$ are connected in $H$ during the \textbf{Link centers} step. By the triangle inequality:
\[
\dist_i(c_i(s), c_i(t)) \leq \dist_i(c_i(s), s) + \dist_i(s, t) + \dist(t, c_i(t)).
\]

Next, we bound $\dist_i(c_i(s), s) + \dist(t, c_i(t))$ using the fact that $\dist_i(x, C_i) \leq r_{i+1}$:
\begin{equation}
\label{ineq1}
\begin{aligned}
\dist_i(c_i(s), s) + \dist(t, c_i(t)) &\leq \dist_i(c_i(x), s) + \dist_i(t, c_i(x)) \\
&\leq \dist_i(c_i(x), x) + \dist_i(x, s) + \dist_i(t, x) + \dist_i(x, c_i(x)) \\
&\leq 2r_{i+1} + |P|.
\end{aligned}
\end{equation}

Substituting back, we have:
\begin{equation}
\label{ineq2}
\dist_i(c_i(s), c_i(t)) \leq |P| + (2r_{i+1} + |P|) = 2r_{i+1} + 2|P| \leq r_i.
\end{equation}

Here, the last inequality holds because $ \frac{1}{2}r_i = r_{i+1} + \frac{d}{2^{i}} $. Thus, by construction $c_i(s)$ and $c_i(t)$ are connected in $H$ by a path of this length.
  
Now, we construct the path $P' \subseteq H$ and bound its length:
\[
\dist_H(s, t) \leq \dist_H(s, c_i(s)) + \dist_H(c_i(s), c_i(t)) + \dist_H(c_i(t), t).
\]
Combining inequalities (\ref{ineq1}), (\ref{ineq2}):
\[
\dist_H(s, t) \leq (2r_{i+1} + 2|P|) + (2r_{i+1} + |P|) = 4r_{i+1} + 3|P|.
\]
This completes the proof, as the path in $H$ satisfies the required length bound. See figure \ref{fig:stretch_gen}.
\end{proof}

\begin{figure}[h!]
    \centering
\begin{tikzpicture}[scale=1.5, every node/.style={font=\small}]

\coordinate (s) at (0,0);
\coordinate (x) at (2.5,0);
\coordinate (t) at (6,0);
\coordinate (cx) at (2.5,1.5);
\coordinate (cs) at (-1,2);
\coordinate (ct) at (7,2);

\draw[thick, black] (s) -- (cs) node[midway, left] {$\leq r_{i+1}+\dist_i(x,s)$};
\draw[thick, black] (x) -- (cx) node[midway, right] {$\leq r_{i+1}$};
\draw[thick, black] (cs) -- (ct) node[midway, above] {$\leq r_{i+1} + d / 2^i + (r_{i+1} + d / 2^i) = r_i$};
\draw[thick, black] (ct) -- (t) node[midway, right] {$\leq r_{i+1} +\dist_i(x,t)$};
\draw[thick, magenta, dashed] (s) -- (t) node[midway, below right, magenta] {$|P| \leq d / 2^i$};

\draw[thick, gray, dashed] (cx) -- (t) node[midway, above, black] {$\leq r_{i+1} + \dist_i(x,t)$};
\draw[thick, gray, dashed] (cx) -- (s) node[midway, above, black] {$\leq r_{i+1} + \dist_i(x,s)$};

\filldraw[black] (s) circle (2pt) node[below] {$s$};
\filldraw[black] (x) circle (2pt) node[below] {$x$};
\filldraw[black] (t) circle (2pt) node[below] {$t$};
\filldraw[black] (cs) circle (2pt) node[above] {$c_i(s)$};
\filldraw[black] (cx) 
circle (2pt) node[above] {$c_i(x)$};
\filldraw[black] (ct) circle (2pt) node[above] {$c_i(t)$};

\end{tikzpicture}
\caption{Let $P$ be a path between $s,t$ of length $\leq d/2^i$. Assuming that $\dist_i(x,c_i(x)) \leq r_{i+1}$, we can prove that $\dist_i(c_i(s),c_i(t)) \leq r_i$ and therefore $c_i(s),c_i(t)$ connected in $H$ by a path. So the path $P' = s \to c_i(s) \to c_i(t) \to t$ is entirely in $H$ and of length $ |P'| \leq 4r_{i+1} + 3 |P|$. The black paths are in $H$, and the dashed segments are in $G[V_i]$.}
\label{fig:stretch_gen}

\end{figure}

In the next claim, we provide the exact terms of $r_i$. We then substitute it into Lemma \ref{cl:stretch_gen} to get the actual approximation of the paths (without the dependency in $r_{i+1}$).

\begin{claim}
\label{cl:rsize_gen} 
Given $r_0 = k + 2d \log d$ and $r_{i+1} = \frac{1}{2} r_i - \frac{d}{2^i}$, then for every $i \geq 0$ we have:
\begin{equation*}
r_i =  \frac{1}{2^i} \left( k + 2d( \ld - i) \right).
\end{equation*}

In particular, when $i = \ld$, we get $r_{\ld} = \frac{k}{d}$, and when $i = \ld + 1$, we get $r_{\ld+1} = \frac{k}{2d} - 1$.
\end{claim}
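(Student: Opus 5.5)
The plan is to prove the closed form by induction on $i$, and then read off the two special values by substitution.

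For the base case $i=0$, the formula gives $r_0=k+2d\log d$, which is exactly the initialization in Algorithm~\ref{al:spaOne}.

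For the inductive step, assume $r_i=\frac{1}{2^i}\bigl(k+2d(\log d-i)\bigr)$ and apply the recursion $r_{i+1}=\tfrac12 r_i-\tfrac{d}{2^i}$:
\[
r_{i+1}=\frac{1}{2^{i+1}}\bigl(k+2d(\log d-i)\bigr)-\frac{2d}{2^{i+1}}=\frac{1}{2^{i+1}}\bigl(k+2d(\log d-(i+1))\bigr).
\]
The key observation is that writing $\tfrac{d}{2^i}$ as $\tfrac{2d}{2^{i+1}}$ makes it combine with the $-2di$ term, decreasing the index by one. This completes the induction. The identity is purely algebraic, so it holds for all $i\ge 0$ (as a real number), even beyond the $\log d$ iterations the algorithm actually runs.

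For the special cases, use that $d$ is a power of two, so $2^{\log d}=d$.
\begin{itemize}
\item Substituting $i=\log d$ kills the second term and gives $r_{\log d}=k/d$.
\item Substituting $i=\log d+1$ gives $\frac{1}{2d}(k-2d)=\frac{k}{2d}-1$.
\end{itemize}
Both values are integers because we assumed $d\mid k$. (If $k/d$ is odd, $r_{\log d+1}$ is a half-integer, which is harmless for a formal identity.)

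I do not expect a real obstacle here. The only point needing care is the index bookkeeping in the inductive step, together with the remark that the claim extends past the last iteration.
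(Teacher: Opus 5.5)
Your proposal is correct and is the same proof as the paper's. It uses induction on $i$: the base case comes from the initialization of $r_0$, and the inductive step rewrites $\frac{d}{2^i}$ as $\frac{2d}{2^{i+1}}$. The two special values then follow by substituting $i=\log d$ and $i=\log d+1$ and using $2^{\log d}=d$.

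One side remark contradicts itself. The assumption $d\mid k$ makes only $r_{\log d}=k/d$ an integer, not $r_{\log d+1}=\frac{k}{2d}-1$, as your own parenthetical admits. This does not affect the claim.
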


\begin{proof}
We prove it by induction. We defined $ r_0 = k + 2d \log d$, so the equation holds. We assume that it holds for $r_i$ and prove it for $r_{i+1}$:
\begin{equation*}
    r_{i+1} = \frac{1}{2}r_{i} - \frac{d}{2^i} = \frac{1}{2} \cdot \frac{1}{2^i} \left( k + 2 d (\log d - i) \right) - \frac{2d}{2^{i+1}} = \frac{1}{2^{i+1}} \left( k + 2d \left( \log d - i - 1 \right) \right).
\end{equation*}    
\end{proof}

Combining Lemma~\ref{cl:stretch_gen} and Claim~\ref{cl:rsize_gen}, we can prove the main property about our subgraph $H$. Specifically, for all $0 \leq i \leq \ld$, any path $P$ of length at most $d/2^i$ in $G[V_i]$ has an alternative path $P'$ in $H \cup G[V_{\ld}]$ of length at most $4r_{i+1} + 3|P|$. In particular, setting $i = 0$ implies that for any path $P$ of length at most $d$ in $G = G[V_0]$, there exists a path $P' \subseteq H \cup G[V_{\ld}]$ such that \begin{equation*} |P'| \leq 4r_1 + 3|P| \le 2 \left( k + 2d( \ld - 1) \right) + 3d = 2k + 4d \ld + d. \end{equation*}

Ideally, this would complete our construction. However, \( G[V_{\ld}] \) might be too dense and could contain more than $O(n^{1+1/k})$ edges, so we cannot simply add \( \left| E(G[V_{\ld}]) \right| \) to $H$.
To overcome this, the goal of the second phase is to construct a subgraph \(H' \subseteq G[V_\ld]\) with \(O(n^{1+1/k})\) edges, such that \(H'\) is a \((4 \cdot r_{\ld + 1} + 3) = \left(\frac{2k}{d} - 1\right)\)-multiplicative spanner of \(G[V_\ld]\), where we use Claim~\ref{cl:rsize_gen}, which states that \(r_{\ld + 1} = \tfrac{k}{2d} - 1\). If such a subgraph exists, we can prove the following:
\begin{lemma} \label{cl:stretch_all_gen}
Let $0 \leq i \leq \log d$, and
    let $P$ be a path in $G[V_i]$ of length at most $d/2^i$. Additionally, let $H'$ be a $(\frac{2k}{d}-1)$-multiplicative spanner of $G[V_\ld]$.
    Then, in $H \cup H'$, there is an alternative path between the endpoints of $P$ of length at most $4r_{i+1} + 3|P|$.

    Consequently, by setting \( i = 0 \), we get that every path \( P \subseteq G \) of length at most \( d \) has an alternative path \( P' \subseteq H \cup H' \) between the same endpoints such that
    \[
    |P'| \leq 4r_1 + 3|P|
\leq 2 \left( k + 2d(\ld-1) \right) + 3d
= 2k + 4d\ld - d.
    \]
\end{lemma}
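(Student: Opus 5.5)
I would prove the statement by downward induction on $i$, from $i=\ld$ to $i=0$. The inductive hypothesis for level $i+1$ is: every path $Q\subseteq G[V_{i+1}]$ with $|Q|\le d/2^{i+1}$ has an alternative path in $H\cup H'$ of length at most $4r_{i+2}+3|Q|$. The consequence for $i=0$ then follows by substituting $r_1=\tfrac12(k+2d(\ld-1))$ from Claim~\ref{cl:rsize_gen} and using $|P|\le d$.

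\textbf{Base case $i=\ld$.} A path $P\subseteq G[V_\ld]$ has length at most $1$, so it is either trivial or a single edge of $G[V_\ld]$. Since $H'$ is a $(\tfrac{2k}{d}-1)$-spanner of $G[V_\ld]$, its endpoints are at distance at most $\tfrac{2k}{d}-1$ in $H'$. By Claim~\ref{cl:rsize_gen}, $r_{\ld+1}=\tfrac{k}{2d}-1$, so $4r_{\ld+1}+3|P| = \tfrac{2k}{d}-4+3 = \tfrac{2k}{d}-1$ when $|P|=1$. The trivial case is immediate.

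\textbf{Inductive step.} Let $P\subseteq G[V_i]$ with $|P|\le d/2^i$, where $i<\ld$. I would split into two cases.

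If some vertex $x\in P$ lies in $V_i\setminus V_{i+1}$, then Lemma~\ref{cl:stretch_gen} directly gives a path in $H$ of length at most $4r_{i+1}+3|P|$.

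Otherwise $P\subseteq G[V_{i+1}]$. I would split $P$ at a middle vertex into two subpaths $P_1,P_2$, each of length at most $d/2^{i+1}$, with $|P_1|+|P_2|=|P|$. This is possible because $d/2^{i}$ is an integer, as $d$ is a power of two. The induction hypothesis gives alternative paths of lengths at most $4r_{i+2}+3|P_j|$. Concatenating them yields length at most
\[
8r_{i+2}+3|P| = 4\bigl(r_{i+1}-\tfrac{d}{2^{i}}\bigr)+3|P| \le 4r_{i+1}+3|P|,
\]
using $2r_{i+2}=r_{i+1}-\tfrac{d}{2^{i+1}}\le r_{i+1}$.

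\textbf{Checks.} The only care needed is the recursion identity and that $r_{i+2}$ stays meaningful for $i+1\le\ld$. This holds since $r_{\ld+1}=\tfrac{k}{2d}-1\ge 0$ by the explicit formula in Claim~\ref{cl:rsize_gen} (we need $k\ge 2d$, or else the bound is handled as in the base case). For the final constant, $4r_1=2(k+2d(\ld-1))=2k+4d\ld-4d$, and adding $3|P|\le 3d$ gives $2k+4d\ld-d$.

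I expect the main obstacle to be the bookkeeping in the second case, namely showing that the doubled additive overhead $8r_{i+2}$ fits inside $4r_{i+1}$. This is exactly where the recursion $r_{i+1}=\tfrac12 r_i-\tfrac{d}{2^i}$ is used.
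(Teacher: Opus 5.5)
Your proof is correct and follows the paper's argument essentially step for step: downward induction from $i=\log d$, a base case that uses the fact that $H'$ is a $(4r_{\log d+1}+3)=(\tfrac{2k}{d}-1)$-spanner, and an inductive step that either applies Lemma~\ref{cl:stretch_gen} to a pruned vertex or splits $P$ in half and absorbs the resulting $8r_{i+2}$ into $4r_{i+1}$. One small slip: the identity you cite is really $2r_{i+2}=r_{i+1}-\tfrac{d}{2^{i}}$, not $r_{i+1}-\tfrac{d}{2^{i+1}}$, as your own display already shows; the inequality you need is unaffected.
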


\begin{proof}
    We prove the claim by induction on \(i\), starting from \(i = \log d\) and proceeding down to \(i = 0\).

\medskip
\noindent
\textbf{Base case:} \(i = \log d\).  
In this case, the claim concerns every path \(P \subseteq G[V_\ld]\) of length $\tfrac{d}{2^{\ld}} = 1$—i.e., every edge in \(G[V_\ld]\). The claim states that such a path has stretch at most \((4r_{\ld+1} + 3)\) in \(H \cup H'\). Since \(H'\) is a \((\tfrac{2k}{d} - 1) = (4r_{\ld+1} + 3)\)-multiplicative spanner of \(G[V_\ld]\), this holds trivially. Here, we substituted \(r_{\log d + 1} = \tfrac{k}{2d} - 1\) as given by Claim~\ref{cl:rsize_gen}.

    \textbf{Induction step:} We assume this lemma is valid for $i+1$. Let $P \subseteq G[V_i]$ be a path of length $d/2^i$ between $s,t$. Note that if $P$ is not fully contained in $G[V_{i+1}]$, is means that there is some vertex $x \in P$ that the algorithm deleted during iteration $i$, meaning that $x \in V_i \setminus V_{i+1}$, and by Lemma \ref{cl:stretch_gen} the lemma holds.
    
    So we can assume that $P \subseteq G[V_{i+1}]$. We can now split $P$ in the middle into two paths, $P = P_1\cup P_2 $, where each of them is of length at most $d/2^{i+1}$, and apply the induction step to get the desired stretch. Let $P'_1, P'_2$ be the alternative paths of $P_1,P_2$ in $H$, so $P' = P'_1 \cup P'_2$ is the alternative path of $P$ in $H$ and of length at most
    \begin{equation*}
        |P'| \leq |P'_1| + |P'_2| \leq 3|P_1| + 4r_{i+2} + 3|P_2| + 4r_{i+2} = 3(|P_1| + |P_2|) + 8r_{i+2}< 3|P| + 4r_{i+1},
    \end{equation*}
    as $r_{i+2} < \frac{1}{2}r_{i+1}$ for every $i$.
\end{proof}

We now turn to the second phase, whose goal is to construct the spanner \(H'\) of \(G[V_\ld]\).

\subsection{The Construction – Second Phase}
In this subsection, we construct a \((\tfrac{2k}{d} - 1)\)-spanner \(H' \subseteq G[V_\ld]\) with \(O(n^{1 + 1/k})\) edges in expectation. We begin by outlining the intuition behind the construction, followed by the formal algorithm and its analysis.

\paragraph{Intuition} Assume for now that Claim \ref{cl:size_gen} holds for every $x \in V_\ld$, not only in expectation, i.e, that for every $x \in V_\ld$,
\begin{equation}
    \label{eq:ball_gen}
|B_\ld(x, r_\ld)| = |B_\ld(x, \tfrac{k}{d})| < n^{1/d}.
\end{equation}
 We claim that under this assumption, it is possible to construct a $(\frac{2k}{d} - 1)$-multiplicative spanner of $G[V_\ld]$ using a known approach. This part of the construction closely resembles the "classic" algorithm for building a $(2k - 1)$-multiplicative spanner, as presented in \cite{peleg1988graph}.

The algorithm proceeds as follows: Choose an arbitrary vertex $x \in V_\ld$, and grow a ball centered at $x$ by gradually increasing the radius $i$, stopping at the first radius for which
\begin{equation*}
    |B_{\ld}(x, i )| \leq n^{1/k} |B_{\ld}(x, i-1)|.
\end{equation*}
By Inequality~\ref{eq:ball_gen}, and assuming it holds for every vertex (rather than only in expectation), we are guaranteed that this stopping condition occurs at \(i \leq \tfrac{k}{d}\).

Once such an $i$ is found, the algorithm adds to $H'$ a BFS tree rooted at $x$ spanning all vertices within distance $i$, and then removes from $V_\ld$ all vertices in the tree up to depth $i - 1$. This process repeats until $V_\ld$ is exhausted. Finally, the algorithm returns the spanner $H'$.

As in the classical setting, it is straightforward to verify that $H'$ is a $(\frac{2k}{d} - 1)$-multiplicative spanner for $G[V_\ld]$, and that it contains \(O(n^{1+1/k})\) edges.

Unfortunately, inequality \ref{eq:ball_gen} holds only in expectation, so we cannot directly rely on it to guarantee that the radius $i$ is always at most $\frac{k}{d}$. Instead, we adapt the idea by designing a similar approach that carefully controls the expected number of edges added to $H'$. The goal is to ensure that $H'$ remains sparse in expectation while still providing a good stretch guarantee.

Before proceeding to the Second Phase, we introduce a key sparsity notion:

\begin{definition}[\((R,k)\)-Sparsity]
  Let \(\mathcal{G}\) be a distribution over graphs on a fixed vertex set \(V\) with $n$ vertices, and let \(R, k\) be integers satisfying \(R \leq k\).  We say \(\mathcal{G}\) is \emph{\((R,k)\)-sparse} if for every \(x\in V\),
\[
  \e_{G_s\sim\mathcal{G}}\!\bigl[\lvert B_{G_s}(x,R)\rvert\bigr]
  \;\le\;
  n^{R/k}.
\]
\end{definition}

In our setting we take \(R=\tfrac{k}{d}\). By Lemma~\ref{cl:size_gen}, for every \(x\) we have \(\e[|B_{\ld}(x,R)|]\le n^{R/k}\), so \(G[V_\ld]\) is \((R,k)\)-sparse by definition.

\paragraph{Algorithm Description}

In the second phase, we are given an \((R, k)\)-sparse graph \(G_s = (V_s, E_s)\), and our goal is to construct a \((2R - 1)\)-multiplicative spanner \(H_s \subseteq G_s\) with only \(O(n^{1 + 1/k})\) edges in expectation.

We denote by \(B_s(v, r) := B_{G_s}(v, r)\) the ball of radius \(r\) around vertex \(v \in V_s\) in the graph \(G_s\).

Before presenting the algorithm, we outline the core intuition behind it. At each iteration, the goal is to find a relatively \emph{sparse} vertex \texttt{center} and a radius \(i \leq R\) such that
\[
  |B_s(\texttt{center}, i)| \;\le\; n^{1/k} \cdot |B_s(\texttt{center}, i-1)|,
\]
so that adding a BFS tree of depth \(i\) rooted at \texttt{center} to \(H_s\) and removing the inner ball \(B_s(\texttt{center}, i-1)\) from \(V_s\) incurs a cost of at most \(O(n^{1/k})\) edges per removed vertex.

However, such a sparse vertex may not always exist. In that case, the algorithm allows proceeding with a \emph{dense} center \texttt{center} for which
\[
  |B_s(\texttt{center}, R)| > n^{1/k} \cdot |B_s(\texttt{center}, R-1)|, \quad \text{and} \quad |B_s(\texttt{center}, R-1)| > n^{(R-1)/k}.
\]
In this case, the algorithm adds a BFS tree of depth \(R\) rooted at \texttt{center} to \(H_s\), and removes the inner ball \(B_s(\texttt{center}, R-1)\) from \(V_s\). Note that every vertex in \(B_s(\texttt{center}, R-1)\) is itself dense with respect to radius \(R\), so that while this step may incur a higher cost per edge - more than \(O(n^{1/k})\) edges added per removed vertex — it is applied only to a small fraction of the vertices. This is justified by the \((R,k)\)-sparsity of \(G_s\), which ensures that the expected number of such dense vertices is small. Thus, although some steps may be expensive, they are rare; see Lemma~\ref{cl:size2}, which shows the total expected edges are \(O(n^{1+1/k})\).

To bound the cost of such costly steps, the algorithm always selects the vertex with the smallest estimated density as the next \texttt{center}.

The algorithm proceeds as follows. It initializes the spanner \(H_s = (V_s, \emptyset)\). Then, for each vertex \(x \in V_s\), it assigns a \emph{budget} defined by:
\[
b(x) \;=\; \max\!\left\{\,n^{1/k},\;\frac{\lvert B_s(x,R)\rvert}{n^{(R - 1)/k}}\,\right\},
\]
where \(B_s(x, R)\) is the ball of radius \(R\) around \(x\) in \(G_s\).

The algorithm then iteratively builds the spanner while pruning vertices from \(V_s\). While \(V_s\) is non-empty:

\begin{itemize}
    \item It selects a vertex \(\texttt{center} \in V_s\) with the minimum budget value, i.e.,
    \[
      \texttt{center} = \mathop{\arg\min}_{x \in V_s} \; b(x).
    \]
    \item For \(i = 1\) to \(R\), do:
    \begin{itemize}
        \item If either
        \[
        |B_s(\texttt{center}, i)| \leq n^{1/k} \cdot |B_s(\texttt{center}, i-1)| \quad \text{or} \quad i = R,
        \]
        then:
        \begin{itemize}
            \item It adds the depth-\(i\) BFS tree rooted at \texttt{center} to \(H_s\).
            \item Removes all vertices in \(B_s(\texttt{center}, i-1)\) from \(V_s\).
            \item \textbf{Breaks} and restarts from the outer loop.
        \end{itemize}
    \end{itemize}
\end{itemize}

When \(V_s\) becomes empty, Algorithm~\ref{al:spaTwo} terminates and returns the spanner \(H_s\). Each extracted BFS tree has radius at most \(R\), so the stretch is bounded by \(2R-1\) (Lemma~\ref{cl:2Rstretch}).

The pseudo-code of \spaTwo is apparent in Algorithm \ref{al:spaTwo}.

\SetAlgoNoEnd
\begin{algorithm}[H]
\label{al:spaTwo}

\tcp{\textbf{Second Phase}}

Set $ H_s = ( V_s, \emptyset ) $.

For each $x \in V_s$, compute
\(
b(x) \;=\; \max\!\left\{\,n^{1/k},\;\frac{\lvert B_s(x,R)\rvert}{n^{(R - 1)/k}}\,\right\}.
\)

\While{$V_s$ is not empty }{

Set $\texttt{center} = \arg\min_{x \in V_s}\{ b(x) \}$.

\For{$i=1$ \KwTo $R$}{
  \If{$|B_s(\texttt{center},i)| \le n^{1/k}\,|B_s(\texttt{center},i-1)|$ \textbf{ or } $i=R$}{
    add the depth-$i$ BFS tree rooted at \texttt{center} to $H_s$;\\
    remove all vertices in $B_s(\texttt{center},i-1)$ from $V_s$;\\
    \textbf{break};
  }
}

}
\Return $H_s$

\caption{\spaTwo receives a $(R,k)$-sparse graph $G_s=(V_s,E_s)$ and returns a $(2R-1)$-multiplicative spanner $H$ with $O(n^{1+1/k})$ edges in expectation.}
\end{algorithm}

\subsubsection{Analysis}

We start analyzing Algorithm \spaTwo. We first show that $H_s$ provides the desired stretch, and then we bound its number of edges. The first claim states that $H$ is indeed a $(2R - 1)$-multiplicative spanner of $G_s$.

\begin{lemma}
\label{cl:2Rstretch}
$H_s$ is a $(2R - 1)$-multiplicative spanner of $G_s$.
\end{lemma}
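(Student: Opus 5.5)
It suffices to show that every edge $(u,v)\in E_s$ satisfies $\dist_{H_s}(u,v)\le 2R-1$. Concatenating along a shortest path in $G_s$ then gives multiplicative stretch $2R-1$ for all pairs. The argument follows the classical ball-carving proof of the Baswana--Sen / Peleg--Sch\"affer type construction.

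First we record two facts about a single iteration of the outer loop. Let \texttt{center} be the chosen vertex and $i\le R$ the radius at which the inner loop breaks. Here all balls $B_s(\cdot,\cdot)$ are taken in the current graph, i.e., the subgraph induced by the vertices still in $V_s$, with $E_s$ restricted accordingly.
\begin{enumerate}
\item The algorithm adds a depth-$i$ BFS tree $T$ rooted at \texttt{center}. Hence every $w\in B_s(\texttt{center},i)$ satisfies $\dist_{H_s}(w,\texttt{center})\le i$.
\item The algorithm removes exactly $B_s(\texttt{center},i-1)$, the vertices at depth at most $i-1$ in $T$.
\end{enumerate}
The loop always terminates with some $i\le R$ because of the clause ``or $i=R$''. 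Since $i\ge 1$, \texttt{center} itself is removed, so $V_s$ shrinks every iteration and eventually every vertex is removed.

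Now fix an edge $(u,v)\in E_s$. Consider the first iteration in which at least one of $u,v$ is removed, and say $u$ is removed. Then $u\in B_s(\texttt{center},i-1)$ in the current graph. At the start of this iteration both $u$ and $v$ are still present, so the edge $(u,v)$ lies in the current graph. Therefore $v\in B_s(\texttt{center},i)$, and by fact (1) both endpoints are within $H_s$-distance $i$ of \texttt{center}. We get
\[
\dist_{H_s}(u,v)\le \dist_{H_s}(u,\texttt{center})+\dist_{H_s}(\texttt{center},v)\le (i-1)+i=2i-1\le 2R-1 .
\]
If both endpoints are removed in the same iteration, the same bound holds, and in fact $\le 2i-2$.

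The one point needing care is that balls are computed in the current residual graph rather than the original $G_s$. The argument uses the current graph consistently, and distances there dominate distances in $G_s$, while the BFS tree edges are genuine edges of $G_s$. So $H_s\subseteq G_s$, and the bound $\dist_{H_s}(u,v)\le 2R-1$ transfers directly. The edge-to-path extension is immediate by the triangle inequality. The budget rule for choosing \texttt{center} plays no role in the stretch bound; it matters only for the size analysis.
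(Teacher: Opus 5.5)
Your proposal is correct and follows essentially the same argument as the paper: take the endpoint removed first, note it lies in $B_s(\texttt{center},i-1)$ while the other endpoint is reached by the depth-$i$ BFS tree, and route through \texttt{center} for a path of length at most $2i-1\le 2R-1$. You also spell out two details the paper leaves implicit: that balls are taken in the residual graph, with the edge still present because both endpoints remain at the start of that iteration, and the extension from edges to arbitrary pairs by concatenation.
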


\begin{proof}

Consider an edge $e = (u, v)$ that is removed from $G_s$. Without loss of generality, we assume that $v$ is removed before $u$. At that stage, for some $i \leq R$, there is a vertex $\texttt{center} \in G_s$ such that $v \in B_s(\texttt{center}, i-1)$, and a BFS tree of radius $i$ is added to $H_s$. Since $u$ is also within this BFS tree, there exists a path in $H_s$ between $v$ and $u$ that passes through $\texttt{center}$, with length at most $i - 1 + i = 2i - 1 \leq 2R - 1$.

\end{proof}

We proceed with claiming that the algorithm didn't add too many edges to $H_s$.

\begin{lemma} \label{cl:size2}
At the end of the algorithm \spaTwo, the number of edges in \(H_s\) is \(O(n^{1+1/k})\) in expectation.
\end{lemma}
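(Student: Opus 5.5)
The plan is a charging argument: in every iteration of the while-loop, the edges added to $H_s$ are charged to the vertices removed in that iteration, so that each vertex $v$ is charged at most $b(v)$. Since every vertex is removed exactly once, this gives $|E(H_s)|\le\sum_{x\in V_s} b(x)$. The expectation of this sum is then bounded using $(R,k)$-sparsity.

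\textbf{Setup.} Balls inside the loop are taken in the current residual graph, while the budgets $b(x)$ are computed once on the original $G_s$. Since the residual graph is an induced subgraph of $G_s$, every residual ball is contained in the corresponding original ball. Fix an iteration with chosen vertex $\texttt{center}$ and stopping radius $i$. The depth-$i$ BFS tree has at most $|B_s(\texttt{center},i)|$ edges, and the removed set is $B_s(\texttt{center},i-1)$. This set is nonempty because it contains $\texttt{center}$. I will show that
\[
|B_s(\texttt{center},i)|\le b(\texttt{center})\cdot|B_s(\texttt{center},i-1)|.
\]

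\textbf{Two cases.}
\begin{itemize}
\item \emph{Sparse stop:} the growth condition $|B_s(\texttt{center},i)|\le n^{1/k}|B_s(\texttt{center},i-1)|$ holds at the stopping radius, including possibly at $i=R$. Then the claim is immediate, because $b(\texttt{center})\ge n^{1/k}$.
\item \emph{Dense stop:} $i=R$ and the growth condition fails for every $j\le R$. Iterating the failure for $j=1,\dots,R-1$ gives $|B_s(\texttt{center},R-1)|>n^{(R-1)/k}$. By the definition of the budget and the containment of residual balls in original balls, $|B_s(\texttt{center},R)|\le |B_{G_s}(\texttt{center},R)|\le b(\texttt{center})\,n^{(R-1)/k}$. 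Combining the two bounds gives $|B_s(\texttt{center},R)|< b(\texttt{center})\,|B_s(\texttt{center},R-1)|$.
\end{itemize}

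\textbf{Charging.} In either case, charge $b(\texttt{center})$ to each removed vertex $v$. Every such $v$ was still in the current $V_s$ when $\texttt{center}$ was chosen as the $\arg\min$ of $b$, so $b(\texttt{center})\le b(v)$. Hence each vertex is charged at most its own budget, and deterministically
\[
|E(H_s)|\le\sum_{x\in V_s}b(x)\le\sum_{x}\Bigl(n^{1/k}+\frac{|B_{G_s}(x,R)|}{n^{(R-1)/k}}\Bigr).
\]
Taking expectation and applying $(R,k)$-sparsity, $\e|B_{G_s}(x,R)|\le n^{R/k}$, gives
\[
\e|E(H_s)|\le n\cdot n^{1/k}+n\cdot n^{R/k-(R-1)/k}=2n^{1+1/k}.
\]

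The step needing the most care is the dense case. There, the number of edges per removed vertex can far exceed $n^{1/k}$, and the argument depends on three things: the minimum-budget choice of $\texttt{center}$, the fact that the budget upper-bounds the residual ball (via monotonicity of balls under vertex deletion), and the lower bound $|B_s(\texttt{center},R-1)|>n^{(R-1)/k}$, which comes from the growth condition failing at every radius below $R$.
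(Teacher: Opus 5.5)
Your proposal is correct and follows the paper's proof essentially step for step. It uses the same budget-charging scheme and the same two cases: a sparse stop, where the per-vertex cost is at most $n^{1/k}\le b(x)$, and a dense stop, where $|B_s(\texttt{center},R-1)|$ exceeds $n^{(R-1)/k}$ and the minimum-budget choice of \texttt{center} transfers the cost to each removed vertex; it then applies $(R,k)$-sparsity to get $\e\bigl[\sum_x b(x)\bigr]\le 2n^{1+1/k}$, as the paper does. Your explicit remark that balls inside the loop live in the residual graph and are contained in the original balls defining $b(x)$ is a useful clarification the paper leaves implicit, since it guarantees that each vertex is charged once and that $b(\texttt{center})$ still dominates the residual ball.
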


\begin{proof}
We first show that if each vertex \(x \in V_s\) is charged at most \(b(x)\) edges in expectation, then the total number of edges added to \(H_s\) satisfies
\[
\mathbb{E}\left[\sum_{x \in V_s} b(x)\right] = O(n^{1+1/k}).
\]

Indeed, by linearity of expectation and the definition of the budget function,
\begin{align*}
\mathbb{E}_{G_s \sim \mathcal{G}}\left[ \sum_{x \in V_s} b(x) \right]
&= \sum_{x \in V} \mathbb{E}\left[ b(x) \right] \\
&\leq \sum_{x \in V} \left( n^{1/k} + \frac{\mathbb{E}[|B_s(x, R)|]}{n^{(R-1)/k}} \right) \\
&= n^{1 + 1/k} + \frac{1}{n^{(R-1)/k}} \sum_{x \in V} \mathbb{E}\left[ |B_s(x, R)| \right] \\
&\leq n^{1+1/k} + \frac{n^{1 + R/k}}{n^{(R-1)/k}} = 2n^{1+1/k},
\end{align*}
where the last inequality follows from the \((R,k)\)-sparsity assumption:
\[
\sum_{x \in V} \mathbb{E}[|B_s(x, R)|] \leq n^{1 + R/k}.
\]

Next, we argue that in every removal step, each vertex $x$ is charged at most \(b(x)\) edges:

\begin{enumerate}
    \item \textbf{Case 1:} If for some radius \(i \leq R\),
    \[
    |B_s(\texttt{center}, i)| \leq n^{1/k} \cdot |B_s(\texttt{center}, i-1)|,
    \]
    then each vertex \(x \in B_s(\texttt{center}, i-1)\) pays
    \[
    \frac{|B_s(\texttt{center}, i)|}{|B_s(\texttt{center}, i-1)|} \leq n^{1/k} \leq b(x).
    \]

    \item \textbf{Case 2:} If the above inequality fails for all \(i < R\), the algorithm adds at most \(|B_s(\texttt{center}, R)|\) edges and removes all vertices in \(B_s(\texttt{center}, R-1)\). 

    In particular, since the stopping condition did not hold for any \(i < R\), we must have
    \[
        |B_s(\texttt{center}, R-1)| > n^{(R-1)/k}.
    \]

    Since \(\texttt{center}\) was chosen as the vertex minimizing the budget over the set of remaining vertices, we can evenly distribute the cost among the removed vertices. Then each vertex \(x \in B_s(\texttt{center}, R-1)\) pays,
    \[
     \frac{|B_s(\texttt{center}, R)|}{|B_s(\texttt{center}, R-1)|} \leq \frac{|B_s(\texttt{center}, R)|}{n^{(R-1)/k}} \leq b(\texttt{center}) \leq b(x).
    \]

\end{enumerate}

Combining both cases, we conclude that the total number of edges added to \(H_s\) is at most \(\sum_{x \in V_s} b(x)\), which is \(O(n^{1+1/k})\) in expectation.
\end{proof}

\subsection{Combining All Together}

We now combine the two phases described above into a full construction of our spanner. The algorithm works as follows. It takes as input a graph \(G=(V,E)\) and integers \(d\le k\) with \(d\) a power of two and \(d\mid k\). It then
\begin{itemize}
    \item[$\bullet$] Sets $(H, V_\ld) =$ \spaOne.
    \item[$\bullet$] Sets $H_s =$ \mbox{\sf SpannerPartTwo(G[V}$_{\ld}$\mbox{\sf],k/d,k)}.

    \item[$\bullet$] Finally, it returns the union \(H \cup H_s \).
\end{itemize}

We now state the properties of the final spanner:

\begin{lemma}
The final spanner \(H \cup H_s\) satisfies the following:
\begin{itemize}
    \item[1.] For every pair of vertices \(u, v \in V\) with \(\dist_G(u,v) \leq d\), we have
    \[
    \dist_{H \cup H_s}(u,v) \leq 2k + 4d\ld + d.
    \]
    \item[2.] The expected number of edges in \(H \cup H_s\) is
    \(
    O(n^{1+1/k} + (k + d \log d) \cdot n).
    \)
\end{itemize}
\end{lemma}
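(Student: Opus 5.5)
The plan is to derive both parts directly from the lemmas already proved, instantiating the second phase with $R = k/d$ on the graph $G_s = G[V_\ld]$.

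\textbf{Stretch.} First we check that \spaTwo applies. Since $d \mid k$, the parameter $R = k/d$ is a positive integer with $R \le k$. By Lemma~\ref{cl:size_gen} (its consequence for $i = \ld - 1$), every $x$ satisfies $\e\bigl[|B_\ld(x,R)|\bigr] < n^{1/d} = n^{R/k}$. Here we use $r_\ld = k/d = R$ from Claim~\ref{cl:rsize_gen}, and we count a vertex $x \notin V_\ld$ as having an empty ball. Hence the distribution of $G[V_\ld]$ induced by the random choices of the first phase is $(R,k)$-sparse. Lemma~\ref{cl:2Rstretch} then says that $H_s$ is a $(2R-1) = (\tfrac{2k}{d}-1)$-multiplicative spanner of $G[V_\ld]$. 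This holds deterministically for every realization, which is what Lemma~\ref{cl:stretch_all_gen} requires of $H'$.

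Now take $u,v$ with $\dist_G(u,v) \le d$ and let $P$ be a shortest $u$--$v$ path in $G = G[V_0]$. Applying Lemma~\ref{cl:stretch_all_gen} with $i = 0$ and $H' = H_s$ gives a $u$--$v$ path in $H \cup H_s$ of length at most $4r_1 + 3|P|$. By Claim~\ref{cl:rsize_gen}, $r_1 = \tfrac12\bigl(k + 2d(\ld - 1)\bigr)$, so this length is at most $2k + 4d\ld - 4d + 3d \le 2k + 4d\ld + d$. One edge case needs checking: when $d = 1$ the first phase runs no iterations, $V_\ld = V$ and $R = k$. Then the bound follows directly from $H_s$ being a $(2k-1)$-spanner of $G$.

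\textbf{Size.} By Lemma~\ref{le:size1}, $\e[|E(H)|] = O(r_0 n) = O\bigl((k + 2d\ld)n\bigr)$. By Lemma~\ref{cl:size2}, $\e[|E(H_s)|] = O(n^{1+1/k})$. The subtle point is that Lemma~\ref{cl:size2} is stated for an $(R,k)$-sparse distribution, with the expectation taken over $G_s \sim \mathcal{G}$. We apply it with $\mathcal{G}$ being the law of $G[V_\ld]$ under the first-phase randomness, over the fixed vertex set $V$. Its proof charges at most $b(x)$ per vertex for every realization and then averages using sparsity, so we must make sure no other conditioning is involved. In particular, vertices outside $V_\ld$ are isolated and contribute $b(x) = n^{1/k}$, which is harmless. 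Linearity of expectation then gives $\e[|E(H \cup H_s)|] \le \e[|E(H)|] + \e[|E(H_s)|] = O\bigl(n^{1+1/k} + (k + d\log d)n\bigr)$.

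The main obstacle is this interface: making sure the ball-size bound of Lemma~\ref{cl:size_gen} is an unconditional expectation over all earlier rounds, and not only over $C_{\ld-1}$ given $V_{\ld-1}$. The bound holds for every fixed $V_{\ld-1}$, so taking the outer expectation preserves it and the sparsity definition is met.
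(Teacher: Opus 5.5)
Your proposal is correct and follows the paper's own proof. The stretch comes from Lemma~\ref{cl:stretch_all_gen} with $i=0$ and $H'=H_s$, justified by Lemma~\ref{cl:2Rstretch}, and the size from Lemmas~\ref{le:size1} and~\ref{cl:size2} by linearity of expectation. Your additional checks are sound refinements of points the paper leaves implicit: the $d=1$ case, and the fact that the ball bound of Lemma~\ref{cl:size_gen} holds unconditionally and so certifies $(R,k)$-sparsity.
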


\begin{proof}
Property 1 follows directly from Lemma~\ref{cl:stretch_all_gen}, by setting \(H' = H_s\), and $i=0$, since we proved in Lemma~\ref{cl:2Rstretch} that \(H_s\) is a \((\tfrac{2k}{d} - 1)\)-multiplicative spanner for \(G[V_\ld]\).

Property 2 follows by combining Lemma~\ref{le:size1}, which bounds \(|E(H)| = O((k + d \log d) \cdot n)\) in expectation, with Lemma~\ref{cl:size2}, which shows that \(|E(H_s)| = O(n^{1+1/k})\) in expectation.
\end{proof}

\subsection{General \(d\) and \(k\)}

We assumed until now that \(d\) is a power of 2 and that \(d|k\) (i.e., \(d\) divides \(k\)). If that is not the case, let \(\hat{d} \in [d, 2d)\) be the smallest power of 2 that is at least \(d\), and let \(\hat{k} \in [k, k + \hat{d})\) be the smallest integer such that \(\hat{d}\) divides \(\hat{k}\). Running the algorithm with \(\hat{d}, \hat{k}\) ensures that the size of the resulting spanner \(H \cup H_s\) is
\[
O\left(n^{1 + 1/\hat{k}} + (\hat{k} + \hat{d} \log \hat{d})n\right) = O\left(n^{1 + 1/k} + (k + d \log d)n\right).
\]
By Lemma \ref{cl:stretch_all_gen}, for every pair of vertices at distance at most \(d \leq \hat{d}\) from each other in \(G\), their distance in \(H \cup H_s\) is at most
\[
2\hat{k} + 4\hat{d} \log \hat{d} + \hat{d} \leq 2(k + 2d) + 8d(\log d + 1) + 2d = 2k + 8d \log d + 14d.
\]

\section{The $k$-Hybrid Spanner Construction} \label{sec:h-spa}

In this section, we present a $k$-hybrid spanner construction that guarantees $O(n^{1 + 1/k} + kn)$ expected edges, which is an improvement over the previous construction by Parter \cite{Bypassing}, which had $O(k^2 n^{1 + 1/k})$ edges. For every fixed $k$, this edge bound is optimal up to a constant factor under the girth conjecture.

This algorithm builds on the Baswana–Sen \((2k-1)\)-spanner and the \((\tfrac{3}{2}k,k-1)\)-spanner of \cite{Baswana_linear_weighted}, following the formulation in \cite{alpha_beta_05}. We first recall that one can construct a \((2k-1)\)-spanner with \(O(n^{1+1/k}+kn)\) edges in expectation in \(O(km)\) time.

\paragraph{Definitions and Notation.}
Let \(G = (V, E)\) be an unweighted graph. For vertex subsets \(C, C' \subseteq V\), define
\[
  E(C, C') \;=\; (C \times C') \cap E(G),
\]
the set of edges with one endpoint in \(C\) and the other in \(C'\). If \(C = \{v\}\), we write \(E(v, C')\) or \(E(C', v)\).

A \emph{cluster} is any subset \(C \subseteq V\), and a \emph{clustering} \(\C\) is a collection of pairwise vertex-disjoint clusters. A vertex \(v\) is said to be \emph{adjacent} to a cluster \(C\) if there exists \(u \in C\) such that \((v, u) \in E\).

\medskip
Our construction builds a sequence of \(k+1\) clusterings
\[
  \C_0, \C_1, \dots, \C_k,
  \quad
  \C_0 = \bigl\{\{v\} : v \in V\bigr\},
  \quad
  \C_k = \emptyset,
\]
satisfying \(\lvert \C_i \rvert \le n^{1 - i/k}\) for all \(0 \le i \le k\) in expectation. A vertex \(u\) is called \emph{\(i\)-clustered} if it belongs to some cluster \(C \in \C_i\), and \emph{\(i\)-unclustered} otherwise. For an \(i\)-clustered vertex \(u\), let \(C_i(u)\) denote the unique cluster in \(\C_i\) containing it.

We begin by building the Baswana \etal\ (\cite{alpha_beta_05}) \((2k-1)\)-spanner \(H \subseteq G\) of size $O(n^{1+1/k}+kn)$, by adding edges according to two rules:

\begin{description}
  \item[Rule R1 (cluster-trees).]
    For each \(C\in\C_i\), include in \(H\) a cluster tree of \(C\) of radius at most \(i\) (with respect to its chosen root).

  \item[Rule R2 (cover unclustered vertices).]
    The \textbf{first time} a vertex \(v\) becomes unclustered at level \(i\) (i.e., \(v \in V(\C_{i-1})\) but \(v \notin V(\C_i)\)), include in \(H\) exactly one edge from each set
    \[
      E(v, C'), \quad \text{for every cluster } C' \in \C_{i-1} \text{ adjacent to } v.
    \]
\end{description}

Rule R1 is identical to that in \cite{alpha_beta_05}. Our Rule R2 differs slightly: we apply it only the first time a vertex becomes unclustered, instead of potentially at every level. As a result, the number of edges added is no larger, while preserving correctness. 

\medskip
\paragraph{Randomized Clustering Construction}
The randomized construction of Baswana and Sen \cite{Baswana_linear_weighted} builds each clustering $\C_{i+1}$ by independently sampling every cluster in $\C_i$ with probability $n^{-1/k}$ (and setting $\C_k=\emptyset$).  Starting from 
\[
  H \;=\; (V, \emptyset),\qquad \C_0 \;=\;\bigl\{\{v\}\mid v\in V(G)\bigr\},
\]
the algorithm iterates for $i=1,\dots,k$ as follows:
\begin{enumerate}
  \item Sample each cluster in $\C_{i-1}$ with probability~$n^{-1/k}$ to form $\C_i$ (if $i=k$, set $\C_k=\emptyset$).
  \item In parallel, for each vertex $v\in V(G)$:
    \begin{itemize}
      \item[(R1)] If $v$ does not already belong to a sampled cluster and has a neighbor in some sampled cluster $C\in \C_i$, add $v$ to that cluster and include an arbitrary edge from $E(v,C)$ in $H$.
      \item[(R2)] As in Rule R2 above.
    \end{itemize}
\end{enumerate}

We state the resulting guarantee without proof:

\begin{theorem}[Restatement of Lemma~4.1 and a remark in~\cite{alpha_beta_05}]
\label{th:2k-1-spanner}
For any integer \(k \ge 1\), one can construct a \((2k - 1)\)-spanner of \(G\) with \(O(n^{1+1/k} + kn)\) edges in \(O(km)\) time in expectation.
\end{theorem}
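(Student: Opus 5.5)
The statement is Theorem~\ref{th:2k-1-spanner}, the Baswana--Sen style $(2k-1)$-spanner with $O(n^{1+1/k}+kn)$ expected edges built by the randomized clustering procedure just described. I would prove it directly by analyzing that procedure, separately verifying stretch, size, and running time.

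\textbf{Stretch.} Fix an edge $(u,v)\in E(G)$ and let $i$ be the first level at which one endpoint, say $v$, becomes unclustered, so $v\in V(\C_{i-1})\setminus V(\C_i)$. Such an $i\le k$ exists since $\C_k=\emptyset$.

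If $u$ is still $(i-1)$-clustered, then $u$ lies in a cluster $C'\in\C_{i-1}$ adjacent to $v$. Rule R2 then adds some edge $(v,w)$ with $w\in C'$. By Rule R1, $C'$ carries a cluster tree of radius at most $i-1$ in $H$, so $\dist_H(u,w)\le 2(i-1)$. Hence $\dist_H(u,v)\le 1+2(i-1)=2i-1\le 2k-1$.

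If instead $u$ became unclustered at the same level, the same argument applies, because $u\in V(\C_{i-1})$ by the minimality of $i$. Since stretch on edges suffices for an unweighted graph, $H$ is a $(2k-1)$-spanner.

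One detail needs checking: the radius of a level-$i$ cluster tree is at most $i$. This holds because a vertex joins a sampled cluster through a single edge to a vertex at depth at most $i-1$.

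\textbf{Size.} Rule R1 contributes at most one edge per vertex per level, i.e.\ $O(kn)$ edges in total. Actually it contributes $O(n)$ per level, as a forest on the clustered vertices, which gives $O(kn)$ overall.

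The main obstacle is Rule R2. For a vertex $v$ that first becomes unclustered at level $i<k$, order the clusters of $\C_{i-1}$ adjacent to $v$ arbitrarily, fixing the order before sampling. The vertex $v$ becomes unclustered only if none of these clusters is sampled, which happens with probability $(1-n^{-1/k})^{\ell}$ where $\ell$ is the number of adjacent clusters. The expected number of R2 edges is therefore at most $\sum_\ell \ell(1-p)^\ell\le 1/p=n^{1/k}$.

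Each vertex becomes unclustered for the first time at most once, thanks to the modified Rule R2. This gives $O(n^{1+1/k})$ for levels $i<k$.

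At the last level $i=k$, the vertex $v$ adds at most one edge per cluster of $\C_{k-1}$. There are $\mathbb{E}|\C_{k-1}|=n\cdot n^{-(k-1)/k}=n^{1/k}$ such clusters, contributing $O(n^{1+1/k})$ in total.

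The delicate point is that the conditioning in the level-$i$ bound must use only the independence of the level-$i$ sampling given $\C_{i-1}$. I would handle this by conditioning on $\C_{i-1}$ and applying the tower property.

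\textbf{Running time.} Each level scans every edge a constant number of times, using a bucket per neighboring cluster to pick one edge per cluster. This gives $O(m)$ time per level and $O(km)$ time overall.
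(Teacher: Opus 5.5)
\textbf{There is a genuine gap in your Rule R2 count for levels $i<k$.} Your stretch argument, the $O(kn)$ count for Rule R1, the level-$k$ count via $\e|\C_{k-1}|=n^{1/k}$, and the $O(km)$ running time are fine. The paper gives no proof of this statement and only cites Baswana et al., so everything rests on your argument.

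What you establish for R2 is only a \emph{per-level} bound. Given $\C_{i-1}$ and that $v$ is still clustered, $v$ pays $\ell_i$ edges with probability $q_i=(1-p)^{\ell_i}$. So its conditional expected cost at level $i$ is $\ell_i q_i\le 1/(ep)$. This is a maximum over $\ell$; the sum $\sum_\ell \ell(1-p)^\ell$ you wrote equals $(1-p)/p^2$.

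The fact that $v$ becomes unclustered only once does not remove the sum over levels. By the tower property, the expected total cost of $v$ is
\[
\sum_{i<k}\e\bigl[\mathbf{1}\{v \text{ clustered through level } i-1\}\cdot \ell_i q_i\bigr]\;\le\;\frac{1}{ep}\sum_{i<k}\Pr[v \text{ clustered through level } i-1].
\]
This is only $O(k\,n^{1/k})$, i.e.\ the original $O(kn^{1+1/k})$ bound. Disjointness of the death events would help only if the level-$i$ cost were $O(1/p)$ times the death probability $q_i$. But the cost upon death is $\ell_i$, which can be as large as $\ln(1/q_i)/p$. Fixing an order of the clusters before sampling (the weighted Baswana--Sen device) changes nothing: an unclustered $v$ has no sampled cluster to stop at, and the bound is still per level.

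\textbf{What your approach actually yields.} Use $\ell_i\le\ln(1/q_i)/p$ and $x\ln(1/x)\le x\ln k+1/(ek)$. This bounds the R2 cost of each vertex by $(\ln k+1/e)/p$, so your route gives $O(n^{1+1/k}\log k+kn)$, not $O(n^{1+1/k}+kn)$.

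Heuristically, the $\log k$ is real for this process. Take a locally tree-like graph of degree about $n^{1/k}\ln k$, with $n$ huge relative to $k$. While a fraction $s$ of vertices is still clustered, a surviving vertex sees about $s\,n^{1/k}\ln k$ distinct adjacent clusters. It therefore dies with probability only about $k^{-s}$ per level. Half the vertices die within the first $O(k/\log k)$ levels while $s\ge 1/2$, and each pays $\Omega(n^{1/k}\log k)$ edges. So this needs a new idea, not a more careful write-up.

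\textbf{A way to get the statement as written.} Change the construction:
\begin{itemize}
\item Repeatedly pick a remaining vertex $x$.
\item Take the least $r$ with $|B(x,r+1)|\le n^{1/k}|B(x,r)|$ in the remaining graph; this forces $r\le k-1$, since balls have at most $n$ vertices.
\item Add a BFS tree of $B(x,r+1)$ and delete $B(x,r)$.
\end{itemize}
This gives a $(2k-1)$-spanner with fewer than $n^{1+1/k}$ edges, built deterministically in $O(m+n)$ time; it is essentially the paper's second phase with $R=k$. It does not, however, provide the clusterings $\C_i$ with $\e|\C_i|\le n^{1-i/k}$ that the hybrid construction relies on.
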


\subsection{Algorithm Description}

Our algorithm modifies the \((2k-1)\)-spanner framework to obtain a \(k\)\nobreakdash-hybrid spanner.  We highlight the three main changes:

\medskip
\noindent\textbf{1. Truncated clustering rounds.}  
Instead of running all $k$ levels, stop clustering after $i=\bigl\lceil k/2\bigr\rceil$. We note that Baswana-Sen had already observed (in the unweighted setting) that one can truncate after about $k/2$ levels and then interconnect the remaining high-level clusters; this is precisely the mechanism behind their $(\frac{3}{2}k,k\!-\!1)$–spanner construction.

\noindent\textbf{2. Modified edge‑cover rule (R2*).}  
Rule R2 originally applied only when a vertex first becomes unclustered. Now, we apply it both
\begin{itemize}
  \item at the first level \(i\) where a vertex \(v\) becomes unclustered, and
\item again at the final level \(i=\lceil \tfrac{k}{2}\rceil\), if \(v\) is \(i\)-unclustered,
\end{itemize}
by including in \(H\) exactly one arbitrary edge from each set
\[
  E\bigl(v,C'\bigr),
  \quad
  C'\in\C_{i-1}\text{ adjacent to }v.
\]

\medskip
\noindent\textbf{3. Inter‑cluster connections.}  
To achieve the hybrid stretch, we add two new rules that connect clusters across the midpoint:

\begin{description}
    \item[Rule R3.]  
    For every cluster \(C\in\C_{\lfloor k/2\rfloor - 1}\) and every cluster \(C'\in\C_{\lceil k/2\rceil}\) whose \emph{distance} in \(G\) is at most $2$, include up to two edges in \(H\) that realize a shortest path of length \(\le 2\) between \(C\) and \(C'\).
    
  \item[Rule R4.]  
    For every cluster \(C\in\C_{\lfloor k/2\rfloor}\) and every cluster \(C'\in\C_{\lceil k/2\rceil-1}\) that are adjacent, with at least one edge between them in \(G\), include one such edge in \(H\).
\end{description}

\medskip
Putting everything together:

\begin{enumerate}
  \item[] \textbf{Rules R1 and R2*.} Start to compute a randomized $(2k-1)$-spanner $H$ and stop at level $i = \lceil \tfrac{k}{2} \rceil$.
  \item[] \textbf{Rule 3.} Add to $H$ at most two edges between every pair of clusters \(C\in\C_{\lfloor k/2\rfloor - 1}\) and \(C'\in\C_{\lceil k/2\rceil}\) whose distance in \(G\) is at most $2$.
  
  \item[] \textbf{Rule 4.} Add to $H$ an edge between every pair of adjacent clusters \(C\in\C_{\lfloor k/2\rfloor}\) and \(C'\in\C_{\lceil k/2\rceil-1}\).
\end{enumerate}

We analyze, in expectation, the number of edges added to \(H\) by each rule:

\begin{itemize}
    \item[] \textbf{Rule R1.} For each clustering level \(i\), a spanning tree is added for every cluster in \(\C_i\), with total size \(O(n)\). Across \(\lceil k/2\rceil\) levels, Rule R1 contributes \(O(\lceil k/2\rceil\cdot n)=O(kn)\) edges.

    \item[] \textbf{Rule R2*.} Rule R2* is applied to each vertex at most twice, and each application contributes \(O(n^{1/k})\) edges in expectation. The total number of edges added by this rule is therefore \(O(n^{1+1/k})\).

    \item[] \textbf{Rule R3.} For every pair of clusters \(C \in \C_{\lfloor k/2\rfloor - 1}\) and \(C' \in \C_{\lceil k/2\rceil}\) with distance at most $2$ in \(G\), we add up to two edges. The expected number of such pairs is bounded by
    \[
\e[ \#\text{pairs} ] \le 2\cdot \e[|\C_{\lfloor k/2\rfloor-1}|]\cdot \e[|\C_{\lceil k/2\rceil}|]
= O\!\big(n^{1 - (\lfloor k/2\rfloor - 1)/k}\cdot n^{1 - \lceil k/2\rceil/k}\big)
= O(n^{1+1/k}).
\]

    \item[] \textbf{Rule R4.} For every adjacent pair of clusters
\(C\in\C_{\lfloor k/2\rfloor}\) and
\(C'\in\C_{\lceil k/2\rceil-1}\), we include one edge.
The expected number of such pairs is at most
\[
\mathbb{E}\!\left[
|\C_{\lfloor k/2\rfloor}|\,
|\C_{\lceil k/2\rceil-1}|
\right]
= O\!\left(
n^{1-\lfloor k/2\rfloor/k}
n^{1-(\lceil k/2\rceil-1)/k}
\right)
= O(n^{1+1/k}).
\]
\end{itemize}

\noindent
Summing over all rules, the expected number of edges in the spanner \(H\) is
\(
  O(n^{1 + 1/k} + kn),
\)
as claimed.

As for implementation, Rule R4 can be applied in linear time for any fixed level index \(i\), using standard BFS and cluster bookkeeping. In contrast, Rule R3 may require up to \(O(m\sqrt{n})\) time in total: for each cluster \(C \in \C_{\lceil k/2\rceil}\), we perform a BFS to depth $2$ and identify intersecting clusters from \(\C_{\lfloor k/2\rfloor-1}\); each such BFS may scan up to \(O(m)\) edges in total. Since \( \e \left[ |\C_{\lceil k/2\rceil}| \right] = O(n^{1/2})\), the overall time for Rule R3 is bounded by \(O(m\sqrt{n})\) in expectation.

After establishing correctness of Rules R1–R4, we obtain:

\begin{theorem}
    For any undirected unweighted graph $G$ and an integer \(k \ge 1\), one can construct a $k$-hybrid spanner of \(G\) with \(O(n^{1+1/k} + kn)\) edges in \(O(m \sqrt{n})\) time in expectation.
\end{theorem}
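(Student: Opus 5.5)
The size and running-time bounds were already established above, so what remains is to verify the stretch guarantees of Rules R1, R2*, R3 and R4. Throughout, write $\ell=\lceil k/2\rceil$ and $f=\lfloor k/2\rfloor$, so that $f+\ell=k$. I would use two structural facts of the Baswana--Sen clustering. First, once a vertex is $i$-unclustered it stays unclustered, so an $i$-clustered vertex is also $(i-1)$-clustered. Second, every cluster in $\C_i$ has a tree in $H$ of radius at most $i$ (Rule R1), so any two vertices of one cluster of $\C_i$ are within distance $2i$ in $H$.

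\textbf{Adjacent pairs.} For adjacent pairs I would run the standard argument. Let $(u,v)\in E$ and let $i\le \ell$ be the first level at which $u$ or $v$, say $u$, becomes unclustered, where $i=\ell$ also covers the case that $u$ is $\ell$-unclustered and the final application of R2* fires. Then $v$ is $(i-1)$-clustered, so R2* added an edge from $u$ to some $w\in C_{i-1}(v)$. This gives $\dist_H(u,v)\le 1+2(i-1)\le 2\ell-1\le 2k-1$.

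The remaining case is that both $u$ and $v$ are $\ell$-clustered. Then $u\in C_f(u)$ and $v\in C_{\ell-1}(v)$, and these two clusters are adjacent via the edge $(u,v)$. If they are the same cluster, the path through its tree has length at most $2f$. Otherwise Rule R4 supplies an edge $(a,b)$ between them. Going $u\to a$ inside $C_f(u)$, across $(a,b)$, and $b\to v$ inside $C_{\ell-1}(v)$ gives a path of length at most $2f+1+2(\ell-1)=2k-1$.

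\textbf{Pairs at distance $2$.} Next I would show $\dist_H(u,v)\le 2k$ whenever $\dist_G(u,v)=2$, say along $u-w-v$. If both endpoints are deep in the hierarchy, namely $u$ is $(f-1)$-clustered and $v$ is $\ell$-clustered (or symmetrically), then $C_{f-1}(u)$ and $C_\ell(v)$ are at distance at most $2$ in $G$. Rule R3 then gives a connecting path of at most two edges, and the total length is at most $2(f-1)+2+2\ell=2k$. Otherwise some vertex of the path becomes unclustered at a low level, at most $f-1$ or at most $\ell$. In that case I would chain the R2* edges: the unclustered vertex is joined by an edge to the cluster of its neighbor at the level just below. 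This bounds each of the two hops by roughly $2i-1$ for a small level $i$, so the sum stays at most $2k$; the case analysis is on which of $u,w,v$ is unclustered first.

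\textbf{Pairs at distance $d\ge 3$.} Splitting a shortest path into length-$2$ segments fails for odd $d$, since a leftover single edge costs $2k-1>k$. So I would also prove a distance-$3$ lemma, $\dist_H(u,v)\le 3k$ for $\dist_G(u,v)=3$. This can use the adjacent-pair bound on one edge plus a slightly improved bound on the remaining length-$2$ piece, or R3/R4 applied across the middle edge with the radii of levels $f-1$, $f$, $\ell-1$, $\ell$. Every $d\ge 2$ is a sum of $2$s and $3$s, so concatenating the segments gives $\dist_H(u,v)\le kd$.

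The main obstacle is the case analysis for distances $2$ and $3$ when the vertices of the path become unclustered at intermediate levels. There the R2* edges and the cluster radii must be balanced so that the totals stay at most $2k$ and $3k$. I would first try a potential-style invariant: if a vertex is $i$-unclustered, then every neighbor of it is within distance $2i-1$ in $H$. Combined with Rules R3 and R4 at the midpoint, this should handle the deep-clustered cases.
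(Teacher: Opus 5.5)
\textbf{Gap in the distance-$2$ case.} Your adjacent-pair argument is fine. Using R4 instead of the paper's R3 in the fully clustered case also gives $2f+1+2(\ell-1)=2k-1$. The distance-$2$ argument, however, has a genuine hole. Take $u-w-v$ where $u$ first becomes unclustered at a level $L(u)\le f-1$, while $w$ and $v$ are both $\ell$-clustered. This is not your deep case: $u$ is not $(f-1)$-clustered, and $u$ is not $\ell$-clustered either. So it falls into your ``otherwise'' branch.

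In that branch, only the hop $u$--$w$ has an unclustered endpoint, and your invariant bounds it by $2L(u)-1$. The hop $w$--$v$ joins two vertices that survive every level, and the best bound you have for it is the adjacent-pair bound $2k-1$. The sum is $2k+2L(u)-2>2k$ once $L(u)\ge 2$; for example $k=8$, $L(u)=2$ gives $18>16$.

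Other routes fail as well. Going $u\to w$, through $C_{\ell-1}(w)$, and over an R4 edge into $C_f(v)$ costs the same $2k+2L(u)-2$. The edge that R2* gave $u$ at level $L(u)$ lands in $C_{L(u)-1}(w)$. That cluster need not lie inside $C_{\ell-1}(w)$, because a vertex whose cluster is not sampled joins a different neighboring cluster individually.

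\textbf{The missing ingredient.} You need the second application of R2*, the one at the final level. Since $u$ is still $\ell$-unclustered there, R2* adds an edge $(u,t)$ with $t\in C_{\ell-1}(w)$. R4 adds an edge $(y,x)$ with $y\in C_{\ell-1}(w)$ and $x\in C_f(v)$. This gives $\dist_H(u,v)\le 1+2(\ell-1)+1+2f=2k$. This case is exactly why the construction reapplies R2* at level $\ell$ and why R4 exists at all, since for adjacent pairs R3 alone suffices. A proof that never uses the final-level R2* edges therefore cannot close.

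The remaining distance-$2$ cases go as you say. If $w$ is ever unclustered, or both $u$ and $v$ are, each edge costs at most $2\ell-1\le k$.

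\textbf{Distance $3$.} Your distance-$3$ sketch inherits this gap, and its first option does not work: you cannot afford to spend the adjacent-pair bound $2k-1$ on one edge. Instead, split into two cases.
\begin{itemize}
\item If some vertex of the path is ever unclustered, charge at most $k$ to the edge at that vertex and at most $2k$ to the remaining length-$2$ piece.
\item If all four vertices survive, route both endpoints into the single cluster $C_\ell(z)$ via R3. $C_{f-1}(v)$ is within distance $2$ of it and $C_{f-1}(u)$ within distance $1$, so the total is at most $2k+2f-1\le 3k-1$.
\end{itemize}
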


It remains to show that \(H\) is a \(k\)-hybrid spanner. First, we state a lemma about unclustered vertices:

\begin{lemma}
\label{le:hstretch}

Let \(v\in V\) be unclustered in some level \(l\le \lceil k/2\rceil\). Then for any edge \((v,u)\in E\), we have \(\dist_H(v,u)\le k\).

\end{lemma}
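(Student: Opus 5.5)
We argue directly from Rule R2*, using the standard Baswana--Sen invariant that every vertex of a cluster $C\in\C_{i}$ is joined to the root of $C$ by a path in $H$ of length at most $i$ (Rule R1). Fix $v$ and let $i\le l$ be the \emph{first} level at which $v$ becomes unclustered, so $v\in V(\C_{i-1})$ but $v\notin V(\C_i)$. Let $(v,u)\in E$.

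\textbf{Step 1: $u$ is $(i-1)$-clustered.} We split according to the clustering status of $u$ at level $i-1$. If $u$ is $(i-1)$-clustered, then the cluster $C'=C_{i-1}(u)\in\C_{i-1}$ is adjacent to $v$. At the moment $v$ first becomes unclustered, Rule R2* adds an edge $(v,w)$ with $w\in C'$. Both $w$ and $u$ lie in $C'$, whose cluster tree (Rule R1) has radius at most $i-1$. Hence
\[
\dist_H(v,u)\le 1+2(i-1)=2i-1\le 2\lceil k/2\rceil-1\le k .
\]
This inequality requires $i\le\lceil k/2\rceil$, which holds by assumption.

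\textbf{Step 2: $u$ is not $(i-1)$-clustered.} Then $u$ became unclustered strictly earlier, at some first level $j\le i-1$. At that time $v\in V(\C_{j-1})$, since clusterings are nested and $v$ is still clustered at level $i-1\ge j-1$. So $C_{j-1}(v)$ is adjacent to $u$, and Rule R2* applied to $u$ adds an edge $(u,w)$ with $w\in C_{j-1}(v)$. This gives
\[
\dist_H(v,u)\le 1+2(j-1)\le 2i-3<k .
\]
Equivalently, we may argue symmetrically by the smaller first-unclustering level of the two endpoints.

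\textbf{Step 3: the level-$0$ corner case.} If $i-1=0$, clusters are singletons, so Rule R2* includes the edge $(v,u)$ itself whenever $u$ is $0$-clustered, which always holds.

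\textbf{Main obstacle.} The delicate point is the nesting and membership bookkeeping. A vertex clustered at level $i-1$ must also be clustered at every lower level, and the cluster tree of $C\in\C_{i-1}$ must have radius at most $i-1$ and contain all of $C$, including vertices that joined via Rule R1. I would verify this by induction on the sampling process: each joining vertex attaches by one edge to a sampled cluster of radius at most $i-1$ around its root. Finally, note that the bound $2\lceil k/2\rceil-1\le k$ is exactly where the truncation at level $\lceil k/2\rceil$ is used.
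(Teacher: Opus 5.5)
Your proof is correct and is essentially the paper's argument: the paper takes $l$ minimal such that either endpoint is $l$-unclustered, assumes without loss of generality it is $v$, and combines the Rule R2* edge from $v$ into $C_{l-1}(u)$ with the Rule R1 cluster tree to get $2l-1\le k$. Your Steps~1--2 simply make that WLOG explicit as a case split on which endpoint becomes unclustered first, as you yourself note.
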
 

\begin{proof}
Let $(u, v)$ be an arbitrary edge in the
original graph. Let $l\leq \lceil \tfrac{k}{2}\rceil$ be minimum
such that either $u$ or $v$ was $l$-unclustered, and without loss of generality suppose it is $v$. By Rule R2* there must be
an edge in $H$ from $v$ to $C_{l-1}(u)$ — call it $(v, w)$ —
and by Rule R1 there must be a path in $H$ from
$u$ to $w$ of length at most $2(l-1)$, twice the radius
of $C_{l-1}(u)$. Since $l \leq \lceil \tfrac{k}{2}\rceil $ it follows immediately that
\[
\dist_H(v,u) \leq \dist_H(v,w) + \dist_H(w,u) \leq 2l - 1 \leq k.
\]
\end{proof}

\begin{lemma}
\label{lem:k-hybrid-spanner}
The graph \(H\) is a \(k\)\nobreakdash-hybrid spanner of \(G\).
\end{lemma}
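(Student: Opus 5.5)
The plan is to verify the two defining inequalities of a $k$-hybrid spanner separately, using the nesting of clusters and the radius bound of Rule R1. Throughout, a cluster of $\C_j$ has radius at most $j$, and every cluster of $\C_{j+1}$ is contained in a cluster of $\C_j$. Hence a vertex that is $\lceil k/2\rceil$-clustered has clusters $C_j(\cdot)$ at every level $j\le\lceil k/2\rceil$.

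\textbf{Step 1: classify edges.} By Lemma~\ref{le:hstretch}, every edge $(u,v)\in E$ such that one endpoint is unclustered at some level $l\le\lceil k/2\rceil$ satisfies $\dist_H(u,v)\le k$. Call such an edge \emph{good}. The remaining, \emph{bad}, edges have both endpoints in clusters of $\C_{\lceil k/2\rceil}$.

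\textbf{Step 2: adjacent pairs.} For a bad edge $(u,v)$, take $C=C_{\lfloor k/2\rfloor}(u)$ and $C'=C_{\lceil k/2\rceil-1}(v)$. They are adjacent through $(u,v)$, so Rule R4 put some edge $(a,b)$ with $a\in C$, $b\in C'$ into $H$. Walking inside the cluster trees gives
\[
\dist_H(u,v)\le 2\lfloor k/2\rfloor+1+2(\lceil k/2\rceil-1)=2k-1 .
\]
Together with Step 1, this gives stretch $2k-1$ for all adjacent pairs.

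\textbf{Step 3: paths of length two.} The key intermediate claim is that every path $x,y,z$ in $G$ has $\dist_H(x,z)\le 2k$.
\begin{itemize}
\item If both edges are good, the bound follows from Step 1.
\item If $x$ and $z$ are both $\lceil k/2\rceil$-clustered, apply Rule R3 to $C_{\lfloor k/2\rfloor-1}(x)$ and $C_{\lceil k/2\rceil}(z)$, which are at $G$-distance at most $2$. This gives
\[
\dist_H(x,z)\le 2(\lfloor k/2\rfloor-1)+2+2\lceil k/2\rceil=2k .
\]
\item In the mixed case, exactly one edge is good because one endpoint, say $x$, is unclustered at some level $l$, while $y$ and $z$ are top-level clustered. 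Here I would use the R2* edges. The first one goes from $x$ to $C_{l-1}(y)$, and this cluster contains $C_{l-1}(z)$ when $l-1\le\lceil k/2\rceil$, by nesting. The extra R2* application at the final level handles the case where $x$ is top-unclustered. This yields $\dist_H(x,z)\le 1+2(l-1)\le k$ when $y$ and $z$ share that cluster. Otherwise I would fall back on the R3 bound with the roles of the endpoints chosen appropriately.
\end{itemize}

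\textbf{Step 4: general distances $d\ge 2$.} Take a shortest $u$--$v$ path and split it into consecutive length-$2$ segments. For even $d$, Step 3 gives at most $2k$ per segment, so $\dist_H(u,v)\le kd$. For odd $d\ge 3$, one segment must have length $3$, and I need $\dist_H\le 3k$ for it. I would argue on $x,y,z,w$ as follows.
\begin{itemize}
\item If some edge of the segment is good, combine a bound of $\le k$ for it with Step 3 on the other two edges, giving $\le 3k$.
\item If all three edges are bad, all four vertices are top-clustered. Apply R3 between $C_{\lfloor k/2\rfloor-1}(x)$ and $C_{\lceil k/2\rceil}(z)$ to reach $z$'s top cluster within about $2\lfloor k/2\rfloor$. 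Then use R4 from $C_{\lfloor k/2\rfloor}(z)$ to $C_{\lceil k/2\rceil-1}(w)$, reusing the already-traversed cluster of $z$ instead of paying its diameter twice. That accounting should fit within $3k$.
\end{itemize}

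The main obstacle is this odd-length case, together with the mixed cases in Step 3. These require carefully chaining the R3 and R4 edges through a common cluster so that radii are not double counted. I would first try to make the bookkeeping uniform by always routing through the root of the highest-level cluster of the middle vertex.
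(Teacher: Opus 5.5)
Steps 1 and 2 are correct. In Step 2 you use R4 where the paper uses R3; both give $2k-1$. The R3 case of Step 3 and the split into segments of length $2$ and $3$ also match the paper. There are, however, two genuine gaps, and they are exactly the two places you flagged.

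\textbf{The mixed case of Step 3.} Here $x$ is unclustered at some level $l$, while $y$ and $z$ are $\lceil k/2\rceil$-clustered. None of your three tools closes this case.
\begin{itemize}
\item The claim ``$C_{l-1}(y)\supseteq C_{l-1}(z)$ by nesting'' is false: $y$ and $z$ can lie in different clusters of the same level.
\item The first R2* edge only gives $\dist_H(x,y)\le 2l-1$. Adding the $2k-1$ bound for $(y,z)$ exceeds $2k$ whenever $l\ge 2$.
\item R3 cannot be applied with $x$ as an endpoint. When $l$ is small, $x$ lies in no cluster of $\C_{\lfloor k/2\rfloor-1}$ or of $\C_{\lceil k/2\rceil}$.
\end{itemize}
The missing step is to pair the final-level R2* edge with R4. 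Since $x$ is $\lceil k/2\rceil$-unclustered and adjacent to $y\in V(\C_{\lceil k/2\rceil-1})$, R2* gives an edge $(x,t)$ with $t\in C_{\lceil k/2\rceil-1}(y)$. The edge $(y,z)$ makes $C_{\lceil k/2\rceil-1}(y)$ adjacent to $C_{\lfloor k/2\rfloor}(z)$, so R4 supplies an edge $(b,a)$ with $b\in C_{\lceil k/2\rceil-1}(y)$ and $a\in C_{\lfloor k/2\rfloor}(z)$. Then
\[
\dist_H(x,z)\le 1+2(\lceil k/2\rceil-1)+1+2\lfloor k/2\rfloor=2k .
\]
This is the paper's Case 3 for length-two paths. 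You already had R4 in Step 2 but did not use it here.

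\textbf{The nesting premise and the all-bad length-3 case.} Your premise that every cluster of $\C_{j+1}$ is contained in a cluster of $\C_j$ is backwards for the Baswana--Sen clustering. A cluster of $\C_{j+1}$ is a sampled cluster of $\C_j$ plus vertices that joined it from unsampled clusters, so it \emph{contains} a level-$j$ cluster. Only $V(\C_{j+1})\subseteq V(\C_j)$ holds. That is enough for Step 2 and the R3 case, but it breaks your all-bad length-3 case when $k$ is odd.
\begin{itemize}
\item The R3 path from $x$ lands in $C_{\lceil k/2\rceil}(z)$.
\item The R4 edge leaves from $C_{\lfloor k/2\rfloor}(z)$, a different cluster that need not contain the landing point.
\item Detouring through $z$ pays both cluster diameters, about $4k-1$, which exceeds $3k$.
\end{itemize}
The fix is to use R3 for the $z$--$w$ link as well. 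Since $C_{\lfloor k/2\rfloor-1}(w)$ is within distance $1$ of $C_{\lceil k/2\rceil}(z)$, both $x$ and $w$ reach the single cluster $C_{\lceil k/2\rceil}(z)$, and you pay its diameter only once:
\[
2\lfloor k/2\rfloor+2\lceil k/2\rceil+\bigl(2\lfloor k/2\rfloor-1\bigr)\le 3k-1 .
\]
For even $k$ your route also works, because the two levels coincide.

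\textbf{A smaller fix.} In the length-3 case where only the middle edge is good, ``the other two edges'' do not form a path. Instead, the unclustered endpoint of the middle edge (say $y$, after reversing the path if needed) also makes $(x,y)$ good. Then apply Step 3 to $y,z,w$, giving at most $k+2k=3k$.
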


\begin{proof}
We analyze the stretch separately for all pairs of vertices at distances $1,2$ and $3$, and then extend the result to arbitrary distances using concatenation. Specifically, we prove:
\begin{enumerate}
  \item \textbf{Stretch for single edges.}  
    Let \((v,u)\in E(G)\).  
    \begin{enumerate}
      \item If either \(v\) or \(u\) becomes unclustered in some round, then by Lemma~\ref{le:hstretch} we have
        \(\dist_H(v,u)\le k \le 2k-1\).
      \item Otherwise both survive through level \(\lfloor k/2\rfloor\).  Then 
        \(C_{\lfloor k/2\rfloor-1}(v)\) and \(C_{\lceil k/2\rceil}(u)\) 
        are adjacent in \(G\), so by Rule R3 there is an edge \((x,y)\in H\) with 
        \(x\in C_{\lfloor k/2\rfloor-1}(v)\), \(y\in C_{\lceil k/2\rceil}(u)\).  
        By Rule R1 there is a path of length \(\le2\lfloor k/2\rfloor-2\) from \(v\) to \(x\) and length \(\le2\lceil k/2\rceil\) from \(y\) to \(u\).  Hence
        \[
          \dist_H(v,u)
          \;\le\;
          2\lfloor\tfrac{k}{2}\rfloor - 2
          \;+\;1\;+\;2\lceil\tfrac{k}{2}\rceil
          \;=\;
          2k-1.
        \]
    \end{enumerate}

  \item \textbf{Stretch for paths of length 2.}  
    Let \(P=(v,w,u)\) be a shortest path of length \(2\).  We consider three cases:
    \begin{enumerate}
\item[\textbf{Case 1}]

If $w$ was not clustered at some point, or if both $u$ and $v$ were not clustered at some point (can be in different iterations), by applying Lemma \ref{le:hstretch}, we get that in $H$, the distances between $ v $ and $w$ and between $w$ and $u$ are at most $k$, so $ \dist_H(v,u) \leq 2k $.

\end{enumerate}

In the next cases, we can assume that $w$ was clustered all the way, and also one of $v, u$. We assume without loss of generality that it is $v$. We consider two cases, depending on whether $u$ was $\lceil \frac{k}{2} \rceil $-clustered. See figure \ref{fig:two-cases}.

\begin{enumerate} 

\item[\textbf{Case 2}]

If \(u\) is \(\lceil \tfrac{k}{2} \rceil\)-clustered, then \(\dist_G\!\big(C_{\lfloor k/2\rfloor-1}(v),\, C_{\lceil k/2\rceil}(u)\big)\le 2\), and by Rule R3 the algorithm connected $ C_{\lfloor \frac{k}{2} \rfloor - 1}(v) $ and $ C_{\lceil \frac{k}{2} \rceil }(u) $ by a path of length at most two, so the distance between $u$ and $v$ is at most
\begin{equation*}
\dist_H(v,u) \leq 2 (\lfloor \frac{k}{2} \rfloor - 1) + 2 + 2 ( \lceil \frac{k}{2} \rceil ) = 2( \lfloor \frac{k}{2} \rfloor +  \lceil \frac{k}{2} \rceil) = 2k.  
\end{equation*}

\item[\textbf{Case 3}] 

If \(u\) is \(\lceil \tfrac{k}{2} \rceil\)-unclustered, then by Rule R2*, the algorithm adds an edge from \(u\) to every cluster
\(C \in \C_{\lceil \tfrac{k}{2} \rceil - 1}\) adjacent to \(u\). Since \(w\) is
\((\lceil \tfrac{k}{2} \rceil - 1)\)-clustered and \((u,w)\in E\), there is an edge
\((u,t)\in H\) with \(t \in C_{\lceil \tfrac{k}{2} \rceil - 1}(w)\).
By Rule R4, there is also an edge \((x,y)\in H\) with
\(x \in C_{\lfloor \tfrac{k}{2} \rfloor}(v)\) and \(y \in C_{\lceil \tfrac{k}{2} \rceil - 1}(w)\).
Therefore,
\begin{align*}
\dist_H(v,u)
&\le \dist_H(v,x)
   + \underbrace{\dist_H(x,y)}_{=1}
   + \dist_H(y,t)
   + \underbrace{\dist_H(t,u)}_{=1} \\[4pt]
&\le 2\lfloor \tfrac{k}{2}\rfloor + 1
   + 2(\lceil \tfrac{k}{2}\rceil - 1) + 1 = 2k.
\end{align*}

\end{enumerate}

\begin{figure}[h!]
  \centering
  \begin{minipage}{0.45\textwidth}
    \centering
    \begin{tikzpicture}[every node/.style={font=\small}]

  \node[draw, circle, minimum size=2cm] (Cv) at (0,0) {$C_{\lfloor k/2\rfloor - 1}(v)$};
  \node[draw, circle, minimum size=2.3cm] (Cu) at (3,0) {$C_{\lceil k/2\rceil}(u)$};
  \coordinate (v) at (0.8,-0.7);
  \coordinate (u) at (2.1,-0.7);
  \coordinate (x) at (0.8, 0.7);
  \coordinate (y) at (2.1, 0.7);
  \coordinate (c1) at (0, 0);
  \coordinate (c2) at (3, 0);

  \filldraw[black] (v) circle (2pt) node[below] {$v$};
  \filldraw[black] (u) circle (2pt) node[below] {$u$};
  \draw[thick, gray, dotted] (v) -- (c1);
  \draw[thick, gray, dotted] (x) -- (c1);
    \draw[thick, gray, dotted] (y) -- (c2);
  \draw[thick, gray, dotted] (u) -- (c2);
  \draw[thick, magenta, dashed] (u) -- (v) node[midway, below, magenta] {$2$};
  \draw[thick, black] (x) -- (y) node[midway, above, black] {$2$};

\end{tikzpicture}
  \end{minipage}
  \hfill
  \begin{minipage}{0.45\textwidth}
    \centering
    \begin{tikzpicture}[every node/.style={font=\small}]
  \node[draw, circle, minimum size=2.2cm] (Cv) at (0,0) {$C_{\lfloor k/2\rfloor}(v)$};
  \node[draw, circle, minimum size=2.2cm] (Cw) at (3,0) {$C_{\lceil k/2\rceil - 1}(w)$};
  \coordinate (v) at (0.8,-0.7);
  \coordinate (w) at (2.1,-0.7);
  \coordinate (u) at (2.1,-1.7);
  \coordinate (x) at (0.78, 0.8);
  \coordinate (y) at (2.22, 0.8);
  \coordinate (z) at (3, -1.1);
  \coordinate (c1) at (0, 0);
  \coordinate (c2) at (3, 0);

  \filldraw[black] (v) circle (2pt) node[below] {$v$};
  \filldraw[black] (w) circle (2pt) node[below left] {$w$};
  \filldraw[black] (u) circle (2pt) node[below] {$u$};
  \draw[thick, magenta, dashed] (u) -- (w) node[midway, left, magenta] {$1$};
  \draw[thick, gray, dotted] (v) -- (c1);
  \draw[thick, gray, dotted] (x) -- (c1);
  \draw[thick, gray, dotted] (y) -- (c2);
  \draw[thick, gray, dotted] (z) -- (c2);
  \draw[thick, magenta, dashed] (w) -- (v) node[midway, below, magenta] {$1$};
  \draw[thick, black] (x) -- (y) node[midway, above, black] {$1$};
  \draw[thick, black] (u) -- (z) node[midway, below, black] {$1$};

\end{tikzpicture}
  \end{minipage}
\caption{
\textbf{Left (Case 2):} \(u\) remains clustered at level \(\lceil k/2\rceil\), and the clusters \(C_{\lfloor k/2\rfloor - 1}(v)\) and \(C_{\lceil k/2\rceil}(u)\) are within distance at most $2$. Rule R3 adds at most two inter‑cluster edges, and cluster‑tree paths from \(v\) and \(u\) complete a path of length \(\le 2k\).
\textbf{Right (Case 3):} \(u\) is unclustered at level \(\lceil k/2\rceil\), so Rule R2\(^*\) adds an edge from \(u\) to \( C_{\lceil k/2\rceil - 1}(w)\). Rule R4 adds an edge between \(C_{\lfloor k/2\rfloor}(v)\) and \(C_{\lceil k/2\rceil - 1}(w)\), and the cluster‑tree paths from \(v\) to \( C_{\lceil k/2\rceil - 1}(w)\) to \( u \) yield a total path of length \(\le 2k\).}

  \label{fig:two-cases}
\end{figure}

  \item \textbf{Stretch for paths of length 3.}  
    Let \(P=(v,w,z,u)\) be a shortest path of length \(3\).  
    \begin{enumerate}

\item[\textbf{Case 1.}]  
If any of the vertices \(v, w, z, u\) becomes unclustered at some point, assume without loss of generality that it is \(v\) or \(w\). Then, by Lemma~\ref{le:hstretch}, \(\dist_H(v,w) \leq k\). Since \((w,z,u)\) is a shortest path of length 2, the previous part of the proof gives \(\dist_H(w,u) \leq 2k\). Thus,
\[
\dist_H(v,u) \leq \dist_H(v,w) + \dist_H(w,u) \leq k + 2k = 3k.
\]

\item[\textbf{Case 2.}]  
Assume that all vertices \(v, w, z, u\) remain clustered through the relevant levels. Then:

\begin{itemize}
  \item Since \(\dist_G\big(C_{\lfloor k/2 \rfloor - 1}(v),\, C_{\lceil k/2 \rceil}(z)\big) \leq 2\), Rule R3 adds at most two edges between these clusters. We get:
  \[
  \dist_H \big(v,C_{\lceil k/2 \rceil}(z)\big) \leq 2(\lfloor \tfrac{k}{2} \rfloor - 1) + 2 \leq 2 \lfloor \tfrac{k}{2} \rfloor.
  \]
  
  \item Similarly, since \(\dist_G\!\big(C_{\lfloor \tfrac{k}{2} \rfloor - 1}(u),\, C_{\lceil \tfrac{k}{2} \rceil}(z)\big) \le 1\),
Rule R3 adds an inter-cluster edge, yielding
\[
\dist_H\big(u, C_{\lceil \tfrac{k}{2} \rceil}(z)\big)
\;\le\; 2\big(\lfloor \tfrac{k}{2} \rfloor - 1\big) + 1
\;=\; 2\lfloor \tfrac{k}{2} \rfloor - 1.
\]
\end{itemize}

Thus, the path between \(v\) and \(u\) in \(H\) can be approximated by routing through the shared cluster \(C_{\lceil k/2 \rceil}(z)\): we use paths from \(v\) and \(u\) to that cluster, and a short intra-cluster path. Therefore, the total distance between \(v\) and \(u\) is bounded by:
\begin{align*}
\dist_H(v,u) &\leq \dist_H\big(v, C_{\lceil k/2 \rceil}(z)\big) 
+ 2\lceil \tfrac{k}{2} \rceil 
+ \dist_H\big(C_{\lceil k/2 \rceil}(z), u\big) \\
&\leq 2 \lfloor \tfrac{k}{2} \rfloor 
+ 2\lceil \tfrac{k}{2} \rceil 
+ \left( 2 \lfloor \tfrac{k}{2} \rfloor - 1 \right) 
\leq 3k - 1.
\end{align*}

See Figure~\ref{fig:path3}.

\begin{figure}[h!]
  \centering

\begin{tikzpicture}[every node/.style={font=\small}]

  \node[draw, circle, minimum size=2cm] (Cv) at (0,0) {$C_{\lfloor k/2\rfloor - 1}(v)$};
  \node[draw, circle, minimum size=2.5cm] (Cz) at (3.3,0) {$C_{\lceil \frac{k}{2} \rceil}(z)$};
  \node[draw, circle, minimum size=2cm] (Cu) at (6,0) {$C_{\lfloor k/2\rfloor - 1}(u)$};
  \coordinate (v) at (0,1.1);
  \coordinate (z) at (3.3,1.25);
  \coordinate (u) at (6,1.1);
  \coordinate (vv) at (0.7,-0.8);
  \coordinate (zv) at (2.35,-0.8);
  \coordinate (zu) at (4.27,-0.8);
  \coordinate (uu) at (5.3,-0.8);
  \coordinate (c1) at (0, 0);
  \coordinate (c2) at (3.3, 0);
  \coordinate (c3) at (6, 0);

  \filldraw[black] (v) circle (2pt) node[above] {$v$};
  \filldraw[black] (z) circle (2pt) node[above] {$z$};
  \filldraw[black] (u) circle (2pt) node[above] {$u$};
  \draw[thick, gray, dotted] (v) -- (c1);
  \draw[thick, gray, dotted] (vv) -- (c1);
  \draw[thick, gray, dotted] (zv) -- (c2);
  \draw[thick, gray, dotted] (zu) -- (c2);
    \draw[thick, gray, dotted] (uu) -- (c3);
  \draw[thick, gray, dotted] (u) -- (c3);
  \draw[thick, magenta, dashed] (v) -- (z) node[midway, above, magenta] {$2$};
  \draw[thick, magenta, dashed] (z) -- (u) node[midway, above, magenta] {$1$};
  
  \draw[thick, black] (vv) -- (zv) node[midway, below, black] {$2$};
  \draw[thick, black] (zu) -- (uu) node[midway, below, black] {$1$};

\end{tikzpicture}
\caption{Illustration of Case 2 of the length‑3 path. When all vertices \(v,w,z,u\) remain clustered, we route from \(v\) to the cluster \(C_{\lceil k/2\rceil}(z)\) in at most \(2\lfloor k/2\rfloor\) steps (Rule R3), traverse inside the cluster with at most $2\lceil \tfrac{k}{2} \rceil$ steps, and then go from \(C_{\lceil k/2\rceil}(z)\) to \(u\) in at most \(2\lfloor k/2\rfloor-1\) steps. Altogether this yields \(\dist_H(v,u)\le3k-1\).}

  \label{fig:path3}
\end{figure}

\end{enumerate}

  \item \textbf{General paths \(\dist(v,u)=d\ge2\).}  
    Decompose a shortest path of length \(d\) into segments of length \(2\) or \(3\). Using the bounds \(\le 2k\) for length-2 and \(\le 3k\) for length-3, concatenation gives \(\dist_H(v,u)\le k\cdot d\).
\end{enumerate}
This completes the proof that \(H\) is a \(k\)-hybrid spanner.
\end{proof}


\bibliographystyle{unsrt}
\bibliography{mybib}

\pagebreak

\section{Constructing a $ 2k + O(d\ld) $ Spanner in $O(m \sqrt{n})$ Time} \label{sec:apndx}

In this section, we show how to modify the algorithm from Section~\ref{sec:ab-spa} to achieve a running time of \(O(m \sqrt{n})\). Before invoking Algorithm \spaOne, we apply a preprocessing step taken from~\cite{roundtrip_spanners19}, whose goal is to reduce the maximum degree of the graph to \(O(m/n)\).

\paragraph{The preprocessing step.}
Let \(\delta = \lceil m/n \rceil\). For each vertex \(v \in V\), we replace all of its incident edges with a balanced \(\delta\)-ary tree \(T_v\) rooted at \(v\), where all internal edges of the tree have weight \(0\). Each leaf \(u \in T_v\) is responsible for \(\delta'(u) < \delta\) of the original edges incident to \(v\); that is, \(u\) has \(\delta'(u)\) edges to \(\delta'(u)\) distinct neighbors of \(v\), each corresponding to one of the original edges. These edges retain weight \(1\), while the internal tree edges remain with weight \(0\). Every original edge is represented by some leaf, so
\[
\sum_{\text{leaf } u \in T_v} \delta'(u) = \deg(v).
\]
Let the resulting graph be \(G' = (V', E', \omega)\), where \(V \subseteq V'\), and \(\omega: E' \to \{0,1\}\) specifies the edge weights.
The following lemma is proven in~\cite{roundtrip_spanners19} for directed graphs; the same result holds for undirected graphs as well.

\begin{theorem}[Lemma 3.1 in~\cite{roundtrip_spanners19}]
After applying the transformation to obtain \(G' = (V', E', \omega)\), the following properties hold:
\begin{itemize}
    \item \(|V'| = O(n)\) and \(|E'| = O(m + n)\).
    \item All pairwise distances (between pairs of vertices in $G$) in $G'$ and in $G$ are the same.
    \item For every \(v' \in V'\), \(\deg(v') =  O(\tfrac{m}{n}) \).
\end{itemize}
Moreover, this transformation can be computed in \(O(m + n)\) time.
\end{theorem}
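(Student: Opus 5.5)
We prove the three items in order, starting with the sizes. For each $v\in V$, the tree $T_v$ is a balanced $\delta$-ary tree whose leaves carry at most $\delta$ original edges each. So $T_v$ has $O(\lceil \deg(v)/\delta\rceil)$ leaves and, being $\delta$-ary with $\delta\ge 1$ (take $\delta\ge 2$ to avoid degenerate chains), $O(1+\deg(v)/\delta)$ nodes in total. Summing over $v$ gives
\[
|V'| = O\Bigl(n+\sum_v \deg(v)/\delta\Bigr)=O(n+2m/\delta)=O(n).
\]
The edge set $E'$ consists of the tree edges, which number $|V'|-n$, together with one copy of each original edge. We must fix the convention that an original edge $(u,v)$ is represented once, joining a leaf of $T_v$ to a leaf of $T_u$; otherwise the degree bound would fail at $u$. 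With that convention $|E'|=O(m+n)$. The time is linear because building each $T_v$ takes time proportional to its size, and each edge is assigned to a leaf slot in $O(1)$.

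\textbf{Degree bound.} An internal node of $T_v$ has at most $\delta$ children plus one parent. A leaf has one parent plus at most $\delta'(u)<\delta$ original-edge endpoints. Hence every node of $V'$ has degree $O(\delta)=O(m/n+1)$.

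\textbf{Distance preservation.} This is the step needing care. For $x,y\in V$, contract each $T_v$ to the single vertex $v$. Every tree edge has weight $0$, so any path in $G'$ maps to a walk in $G$ of the same weight, and its weight-$1$ edges are exactly original edges. Hence $\dist_G\le \dist_{G'}$. Conversely, a path $x=v_0,\dots,v_\ell=y$ in $G$ lifts to a path in $G'$: inside each $T_{v_j}$, descend and ascend through weight-$0$ edges between the leaves holding the representations of $(v_{j-1},v_j)$ and $(v_j,v_{j+1})$. This lifted path has weight $\ell$, so $\dist_{G'}\le\dist_G$.

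The main obstacle is the representation convention. If each edge is stored once at a leaf of $T_u$ but attached to the original vertex $v$ rather than to a leaf of $T_v$, then the degree of $v$ in $G'$ stays large. So the proof must state precisely that both endpoints of an original edge are leaves, which is what the $O(m/n)$ degree bound requires. Assuming this reading of the construction, as in Lemma~3.1 of~\cite{roundtrip_spanners19}, all three claims follow as above.
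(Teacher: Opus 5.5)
Your proof is correct. The paper does not prove this statement itself: it only sketches the construction and defers to Lemma~3.1 of the cited roundtrip-spanner paper. So your argument supplies a proof the text omits rather than paralleling one.

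Your route is the standard one:
\begin{itemize}
\item bound $|T_v| = O(1+\deg(v)/\delta)$ and sum to $O(n+m/\delta)=O(n)$;
\item count $|E'| = (|V'|-n)+m$;
\item bound degrees by the fan-out $\delta$ plus the leaf load;
\item prove distance equality by contracting each $T_v$ (giving $\dist_G\le\dist_{G'}$) and by lifting a $G$-path through zero-weight tree paths (giving $\dist_{G'}\le\dist_G$).
\end{itemize}

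What your version adds over the paper's text is that it pins down two points the paper's description leaves loose.

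First, your leaf-to-leaf convention is genuinely necessary. The paper says a leaf of $T_v$ has edges ``to $\delta'(u)$ distinct neighbors of $v$.'' Read literally, the endpoint is the original vertex $w$, and this is done at every vertex. Then $w$ is adjacent to a leaf of $T_x$ for every $G$-neighbor $x$, so $\deg_{G'}(w)\ge\deg_G(w)$. The third item then fails, for example at the center of a star.

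Second, $\delta=\lceil m/n\rceil$ may equal $1$. In that case the paper's strict bound $\delta'(u)<\delta$ forces every leaf load to be $0$, which contradicts $\sum_u \delta'(u)=\deg(v)$. Taking $\delta\ge 2$, as you do, fixes this. The resulting degree bound is $O(m/n+1)$, which is the honest form of the $O(m/n)$ claim and coincides with it when $m\ge n$.
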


We omit the proof and refer the reader to~\cite{roundtrip_spanners19} for details. Once this transformation is applied, we run Algorithms \spaOne and \spaTwo on \(G'\) to obtain a spanner \(H' \cup H'_s \subseteq G'\) with the desired stretch guarantee. We then contract the added vertices in \(H'\) to return a spanner \(H \subseteq G\)\footnote{All distances and balls in \(G'\) are defined using
edge weights, and BFS is replaced by 0--1 BFS. We include all zero-weight edges in \(H'\) from the outset. They disappear upon contraction. The second-phase stretch analysis applies to unit-weight edges, while zero-weight edges are preserved exactly.}.

\medskip
The full algorithm works as follows. Given a graph \(G = (V, E)\) and two integers \(d \leq k\), where \(d\) is a power of 2 and divides \(k\), the algorithm performs:

\begin{itemize}
    \item[$\bullet$] It applies the preprocessing step to transform \(G\) into \(G' = (V', E', \omega)\), as described above.
    
    \item[$\bullet$] Let \((H', V'_\ld) = \textsc{SpannerPartOne}(G', k, d)\).
    
    \item[$\bullet$] Let \(H'_s = \textsc{SpannerPartTwo}(G'[V'_\ld], k/d, k)\).
    
    \item[$\bullet$] Let \(H \subseteq G\) be the result of contracting back the auxiliary vertices of \(H' \cup H'_s\) to the original vertex set \(V\).
    
    \item[$\bullet$] Lastly, it returns \(H\).
\end{itemize}

We now prove Theorem~\ref{th:ab-spa}, beginning with the stretch guarantee of \(H\). This follows immediately, since distances between vertices in \(V\) are preserved throughout the transformation: distances in \(G\) and \(G'\), as well as in \(H\) and \(H' \cup H'_s\), are identical when restricted to the original set \(V\).

Next, we bound the size of \(H\) and show that \(|E(H)| = O(n^{1+1/k})\). The first phase of the algorithm is identical to the one analyzed in the proof of Lemma~\ref{le:size1}. There, we argued that in iteration \(i\), the algorithm adds at most \(r_i\) edges to \(H'\) between any pair of vertices \(c, c' \in C_i\) that are at distance at most \(r_i\).

While this argument may appear problematic here—since shortest paths in \(G'\) may include many edges of weight \(0\)—we note that all such zero-weight edges are removed in the final contraction step that maps \(H' \cup H'_s\) back to a subgraph of \(G\). Therefore, only edges of weight \(1\) contribute to the final spanner \(H\), and the original analysis applies.

The second phase is exactly as in Claim \ref{cl:size2}, and hence its edge bound remains the same.

To complete the proof of Theorem~\ref{th:ab-spa}, it remains to bound the running time. We establish this with the following claim:

\begin{claim}
    The total running time of the algorithm is \(O(m \sqrt{n})\) in expectation.
\end{claim}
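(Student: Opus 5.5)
The plan is to bound the running time of each component separately on the transformed graph $G'$, in which $|V'|=O(n)$, $|E'|=O(m+n)$ and every vertex has degree $O(m/n)$. The preprocessing costs $O(m+n)$ and the final contraction costs $O(|E(H'\cup H'_s)|)$, so both are negligible. The real work lies in the \textbf{Link centers} steps of the first phase and in computing balls and budgets in the second phase.

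\emph{First phase.} Each iteration samples $C_i$, runs one $0$--$1$ BFS from $C_i$ over $G'[V'_i]$ for \textbf{Link vertices to centers}, and computes $\dist_i(x,C_i)$ for \textbf{Prune}. With $O(\log d)$ iterations of $O(m)$ each, this is $O(m\log d)$, which is $O(m\sqrt n)$.

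For \textbf{Link centers} we need, for every $c\in C_i$, the set $B_i(c,r_i)\cap C_i$ together with shortest paths. We run a truncated $0$--$1$ BFS from every $c\in C_i$ to radius $r_i$ in $G'[V'_i]$. The cost of a truncated BFS from $c$ is the number of edges scanned inside $B_i(c,r_i)$, which is at most $|B_i(c,r_i)|\cdot O(m/n)$ by the degree bound.

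The key step is to bound $\sum_{c\in C_i}|B_i(c,r_i)|$ in expectation, conditioning on $V_i$ and using the independence of sampling at level $i$.
\begin{itemize}
\item For $i=0$, we have $\mathbb{E}|C_0|=\sqrt n$ and every ball has size $O(n)$, so the sum is $O(n^{3/2})$.
\item For $i\ge1$, vertices survive pruning only with sparse balls. Lemma~\ref{cl:size_gen} gives $\mathbb{E}|B_i(c,r_i)|<n^{1/2^i}$, and $c$ is sampled independently with probability $n^{-1/2^{i+1}}$. Summing over $x\in V_i$ gives $\mathbb{E}\sum_{c}|B_i(c,r_i)|\le n\cdot n^{1/2^i}\cdot n^{-1/2^{i+1}}=n^{1+1/2^{i+1}}\le n^{5/4}$.
\end{itemize}
In every iteration the truncated BFSs therefore cost $O(n^{3/2}\cdot m/n)=O(m\sqrt n)$ in expectation. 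Summing over iterations, the $i=0$ term dominates and the rest form a smaller geometric-type sum (at most $O(\log d\cdot m n^{1/4})$), so the phase total is $O(m\sqrt n)$.

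Lemma~\ref{cl:size_gen} is stated for the ball of radius $r_{i+1}$ at level $i+1$, which is exactly the radius $r_i$ used at level $i$ after reindexing. So the needed bound is $\mathbb{E}|B_i(x,r_i)|\le n^{1/2^i}$ for $x\in V_i$, and this matches. The main obstacle is that the paper bounds the expectation over $C_{i-1}$, while here the product of the random ball size and the independent sampling at level $i$ must be handled. I would condition on $V_i$ first and then take expectations, so independence across levels makes the product factor.

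\emph{Second phase.} We need $|B_s(x,R)|$ for every $x$ to compute budgets, plus BFS layer sizes from each chosen center. Computing $B_s(x,R)$ by truncated BFS from every vertex costs $\sum_x|B_s(x,R)|\cdot O(m/n)$. By $(R,k)$-sparsity this is $n^{1+R/k}\cdot m/n=m\,n^{1/d}$, which is at most $m\sqrt n$ for $d\ge2$ (for $d=1$ the first phase is empty and $G_s=G$; I would handle that case by the classic $O(km)$ construction).

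Inside the while-loop, each center's BFS is confined to the current remaining graph and stops at the first good radius. Each such BFS costs at most the edges incident to its depth-$R$ ball, which is again bounded by $|B_s(\texttt{center},R)|\cdot O(m/n)$. Summing over the chosen centers gives at most $\sum_x|B_s(x,R)|\cdot O(m/n)$, the same bound. Selecting the minimum-budget center each round uses a heap and costs $O(n\log n)$.

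Adding the three parts gives $O(m\sqrt n)$ in expectation.
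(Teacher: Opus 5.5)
Your proposal is correct and follows the paper's argument. You reduce to maximum degree $O(m/n)$ and charge each truncated BFS $O(\tfrac{m}{n}|B_i(c,r_i)|)$, so Lemma~\ref{cl:size_gen} makes iteration $i$ cost $O(m\,n^{1/2^{i+1}})$, and $(R,k)$-sparsity bounds the budget computation in the second phase by $O(m\,n^{R/k}) = O(m\,n^{1/d})$. Your additions fill small gaps the paper leaves implicit rather than changing the route: you condition on $V_i$ so that the level-$i$ sampling factors out of $\sum_{c\in C_i}|B_i(c,r_i)|$ (the paper simply multiplies $\mathbb{E}|C_i|$ by a per-center expectation), you charge the while-loop BFSs to distinct centers, and you treat $d=1$ separately.
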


\begin{proof}
The preprocessing step transforms the input graph \(G\) into a graph \(G'\) with \(O(n)\) vertices and maximum degree \(O(m/n)\), in \(O(m + n)\) time. Due to the bounded degree, a BFS from any vertex \(x\) up to distance \(r\) in every subgraph \(G'' \subseteq G'\) takes \(O\left( \frac{m}{n} \cdot |B_{G''}(x, r)| \right)\) time.

We first analyze the running time of the first phase of the algorithm. In iteration \(i\), the algorithm selects (in expectation) \(|C_i| = O(n^{1 - 1/2^{i+1}})\) centers and performs a BFS from each center to depth \(r_i\). For each center \(c \in C_i\), the expected time for the BFS is \(O\left( \frac{m}{n} \cdot |B_i(c, r_i)| \right)\). By Claim~\ref{cl:size_gen}, we have \(\mathbb{E}[|B_i(c, r_i)|] = O(n^{1/2^i})\), so the total expected time for all BFS traversals in iteration \(i\) is:

\begin{equation*}
\mathbb{E}[|C_i|] \cdot \frac{m}{n} \cdot \mathbb{E}[|B_i(c, r_i)|]
= O\left(n^{1 - 1/2^{i+1}} \cdot \frac{m}{n} \cdot n^{1/2^i}\right)
= O\left(m \cdot n^{1/2^i - 1/2^{i+1}}\right)
= O\left(m \cdot n^{1/2^{i+1}}\right).
\end{equation*}

Additionally, each iteration includes a BFS from the set \(C_i\), which takes \(O(m)\) time, and an update to the vertex set \(V_{i+1}\), which takes \(O(n)\) time.

Summing over iterations \(i = 0\) to \(\log d - 1\), the total expected running time of the first phase is:

\begin{equation*}
\sum_{i=0}^{\log d - 1} \left( O\left(m \cdot n^{1/2^{i+1}}\right) + O(m) + O(n) \right).
\end{equation*}

Since \(n^{1/2^{i+1}}\) decreases rapidly, the dominant term is \(n^{1/2}\) at \(i = 0\), and the total sum is dominated by \(O(m \sqrt{n})\). Thus, the first phase takes \(O(m \sqrt{n})\) time in expectation.

In the second phase, the dominant part of the running time is computing the budget \(b(x)\) for each vertex \(x \in V_s\), which requires running a BFS from each such vertex up to distance \(R\). The total expected running time is:

\begin{equation*}
\mathbb{E}\left[ \sum_{x \in V_s} O\left( \frac{m}{n} \cdot |B_s(x, R)| \right) \right]
= O\left( \frac{m}{n} \sum_{x \in V_s} \mathbb{E}[|B_s(x, R)|] \right)
\leq O\left( m \cdot n^{R/k} \right),
\end{equation*}

where the last inequality follows from the \((R, k)\)-sparsity property. Substituting \(d \geq 2\) and \(R = \tfrac{k}{d}\), we get \(n^{R/k} = n^{1/d} \leq \sqrt{n}\), and thus the second phase runs in \(O(m \sqrt{n})\) time.

Combining both phases, the total expected running time of the algorithm is \(O(m \sqrt{n})\).

\end{proof}



\end{document}